\newtheorem{definition}{Definition}
\newtheorem{proposition}{Proposition}
\newtheorem{remark}{Remark}
\newtheorem{lemma}{Lemma}
\newtheorem{corollary}{Corollary}
\newtheorem{theorem}{Theorem}
\renewcommand{\algocf@captiontext}[2]{#1\algocf@typo. \AlCapFnt{}#2} 
\def\@algocf@capt@plain{top}
\renewcommand{\algocf@makecaption}[2]{%
  \addtolength{\hsize}{\algomargin}%
  \sbox\@tempboxa{\algocf@captiontext{#1}{#2}}%
  \ifdim\wd\@tempboxa >\hsize
    \hskip .5\algomargin%
    \parbox[t]{\hsize}{\algocf@captiontext{#1}{#2}}
  \else%
    \global\@minipagefalse%
    \hbox to\hsize{\box\@tempboxa}
  \fi%
  \addtolength{\hsize}{-\algomargin}%
}
\def\F{{\mathcal F}}
\newcommand{\calv}{\mathcal V}
\newcommand{\calf}{\mathcal F}
\newcommand{\calu}{\mathcal U}
\newcommand{\calb}{\mathcal B}
\newcommand{\call}{\mathcal L}
\newcommand{\caln}{\mathcal N}
\newcommand{\cale}{\mathcal E}
\newcommand{\calg}{\mathcal G}
\newcommand{\calp}{\mathcal P}
\newcommand{\calc}{\mathcal C}
\newcommand\ind{\protect\mathpalette{\protect\independenT}{\perp}}
\def\independenT#1#2{\mathrel{\rlap{$#1#2$}\mkern4mu{#1#2}}}
\title[Graphical criteria for identification of causal effects in continuous-time event-history analyses]
{Graphical criteria for the identification of\\ 
marginal causal   effects in continuous-time survival \\
and event-history analyses}
\author{K. R{\O}YSLAND}
\address{Department of Biostatistics, University of Oslo, Norway}
\email{kjetil.roysland@medisin.uio.no}
\author{P. RYALEN}
\address{Department of Biostatistics, University of Oslo, Norway}
\address{Department of Mathematics, EPFL, Lausanne, Switzerland} 
\author{M. NYG{\AA}RD}
\address{Cancer Registry of Norway}
\author{V. DIDELEZ}
\address{Leibniz Institute for Prevention Research and Epidemiology -- BIPS, and
Department of Mathematics and Computer Science, University of Bremen, Germany}
\begin{document}

\pagestyle{plain}  

\keywords{   Causal inference; Cervical cancer; Independent
   censoring; Local independence models; Survival analysis; Re-weighting }

\begin{abstract}
        We consider continuous-time survival and  event-history settings, where our aim is to {\color{black} graphically represent causal structures allowing us to characterise  when a causal parameter  is {\em identified} from observational data}. 
        This {\color{black} causal parameter} is formalised as the effect on an outcome event of a (possibly hypothetical) intervention on the intensity of a treatment process, i.e.\ a stochastic intervention. 
        To establish {\color{black} identifiability}, i.e. whether valid inference about the interventional situation can be drawn from  observational  data, we propose {\color{black} novel}  graphical rules indicating whether the observed information is sufficient to obtain the desired causal effect by suitable re-weighting. {\color{black} This requires a different type of graph than in discrete time}.  {\color{black} We formally define causal semantics for the} corresponding  
        dynamic graphs that represent local independence models for multivariate counting processes. 
        Importantly, {our work} highlights that causal inference from censored data relies on {subtle} structural assumptions on the censoring process beyond independent censoring. These {\color{black}  assumptions} can also be represented and verified graphically. Put together, our results {\color{black} are the first to} establish {\color{black} graphical rules for} non-parametric identifiability  {\color{black} of hypothetical interventions on event processes in this generality for the continuous-time case, not relying}  on particular {\color{black} (semi-)parametric}  survival models.
        We conclude with a data example on HPV-testing for cervical cancer screening, where {\color{black} the  assumptions are illustrated graphically and} the desired effect is estimated by re-weighted cumulative incidence curves.
\end{abstract}        


\bigskip
\section{Introduction}
Survival analysis is a fundamental field of biostatistics. The typical aim of many medical or epidemiologcial studies is to investigate, e.g.,  how to delay the event of death,  progression of disease or  other untoward occurrences. 
{\color{black} When the research question is about the behaviour under certain {\em changes} to the processes,}  e.g.\ due to some intervention or manipulation, 
we consider this as  {\em causal inference}, in accordance with a wide literature
\citep{rubin:74,robins:86,sgs:00,Pearl,dawid2010identifying,peters:16}. 

{\color{black} 
Adopting the framework of counting processes \citep{Andersen,AalenGjessingBorgan,royslandmsm},
 we here provide a novel formal and graphical framework for causal reasoning about event-histories in continuous time. The proposed causal notion relies on formalising the intended interventions as modified intensities of the relevant continuous-time processes.} This reflects, for example, early versus late treatment initiation, or  higher versus lower frequency, say, of radiation therapy \citep{Ryalen18}. 
 {\color{black} However, while evaluating the effect of such an intervention may be the ultimate aim, we here focus on a key requirement for any causal analysis which is} to establish whether valid inference about the interventional situation can, at least  in principle, be drawn from the available data; latent processes may pose an obstacle, even in randomised studies, especially when they induce unobserved time-dependent confounding \citep{robins:86}. Hence, non-parametric  {\em identification} of the target of inference should be ensured \citep{manski2003partial,PearlShpitser}. Formulating and checking assumption for identification has benefited hugely from graphical representations, specifically causal DAGs \citep{pearl:95:diagram,robins:01a,dawid:02}, where a complete graphical characterisation of identifiability based on $d$-separations is available \citep{PearlShpitser,ShpitserPearl:08}. 
 {\color{black} 
 While these have been extended to discrete-time settings with time-dependent treatments \citep{pearlrobins:95,dawid2010identifying}, } neither causal DAGs nor these criteria can easily be transferred to continuous-time situations modelled with stochastic processes,  exhibiting feedback and censoring \citep{AalenRoyslandRSSA}. To remedy this shortcoming, so-called local independence graphs and the  notion of $\delta$-separation have been suggested as alternative representations of (in)dependencies between processes \citep{Didelez06,DidelezRoyalSB}; so-far, these have lacked an explicit causal semantic despite being used in causal contexts \citep{roysland:AOS,mogensen2020}. 

{\color{black} The novel and central contribution of the present paper} is the concept of {\em causally valid local independence graphs}, and  {\em a graphical criterion for non-parametric identification of causal parameters}, i.e. aspects of the interventional distribution of the continuous-time event processes. 
{\color{black} We introduce a general notion of `eliminable' processes which can be marginalised without destroying the causal structure and which can be checked graphically.}  
While the ensuing conditions for {\color{black} non-parametric identification} essentially demand the absence of unobserved (time-dependent) confounding,  {\em censoring} presents an added complication as it further limits, and possibly biases, the information provided by the observable data. 
As we show, censoring must not only be independent in a probabilistic sense \citep{AndersenEncy} but  also in a causal sense (to be formalised later), i.e. it must allow inference for a situation where censoring is prevented \citep{hernanrobinsbook}. Our results on graphical criteria for identification thus encompass issues of time-dependent confounding as well as right-censoring.

{\color{black} Once it has been established  that the desired causal  parameter is identifiable, we propose to fit a  {\em continuous-time marginal structural model (MSM)}, which
can be done  non-parametrically} by a weighting method \citep{ryalen2019additive}. Here, the continuous-time weights are similar to inverse-probability of treatment (IPTW) and censoring weighting (IPCW) known from discrete-time MSMs \citep{Robins1,Zidovudine}. In fact, the latter can be considered as an approximation to the continuous-time weights. The weighting approach is based on a change of measure technique known from stochastic calculus and financial modelling \citep{royslandmsm}.

The outline of our paper is as follows. We begin with some background, focusing on local independence as a dynamic notion of independence and its graphical representation. Then, we propose our central notion of causal validity including an example how it  may fail. Proposition \ref{prop:LR}  establishes the key role of  re-weighting with a likelihood-ratio process to obtain the interventional distribution. The central result providing  graphical rules for non-parametric identifiability is given in Section \ref{sec:ident}. Proposition \ref{proposition:indCens} then gives general sufficient (graphical) conditions, essentially only assuming that intensities exist, for  causal validity in the context of censoring. Section \ref{sec:msm} joins all these pieces for  main result on how to fit a marginal structural model using suitable re-weighting privded a sufficient set of covariate processes. We conclude by an illustration of our approach with the example of HPV-testing for cervical cancer screening; this is a more advanced analysis, and provides a formal justification for the analysis in \cite{NygaardRoysland}.

\section*{Set-Up  and Notation}

We consider a collection of independent individuals, say, patients,  observed over time $t\in [0,T]$, where $T$ is end of follow-up, and $t=0$ denotes baseline. There may be baseline measurements, but events of interest occur over time: the outcome event(s), e.g.\ survival; a treatment event, e.g. start or switch of treatment; and possibly other events such as the occurrences of side-effects. A special event is censoring. All  events are represented by counting processes, where $N_t^i$ indicates how often an event  $i$ has occurred by time $t$. Where relevant, we specifically denote by $N^y$ and $N^x$ (omitting the subscript $t$) the outcome and treatment processes, respectively. The process $N^c$ denotes the counting process for censoring; sometimes we refer to $C$ as the jumping time of $N^c$. We assume throughout that the compensators of the considered counting processes are absolutely continuous,  implying that the counting processes jump at different times (with probability 1) and  form a multivariate counting process \citep{Andersen}. When referring to \textit{the} intensity of a counting process $N$ we are tacitly referring to a version of the predictable intensity of $N$. We will repeatedly exploit that these intensities, combined with the distribution over the baseline variables, uniquely determine the distribution $P$ over the considered variables and processes \cite[Theorem III 1.26]{JacodShiryaev}.

A continuous-time MSM, as introduced formally in Section \ref{sec:msm}, models the effect of a strategy for generating $N^x$ on some aspect of $N^y$. Let $P,\lambda$ denote the observational distribution and intensities, then we use $\tilde P, \tilde \lambda$ for quantities under the interventional regime  enforcing that strategy.
For example, $\tilde S(t)=\tilde P(N_t^y=0)$ might denote the survival probability under a strategy such as `early treatment initiation' specified by a given intensity $\tilde \lambda^x$ for $N^x$, where $\tilde \lambda^x$ is enforced by the intervention. In order to be able to evaluate such effects from observational, i.e.\ non-interventional, data generated from $P$,  time-varying confounding needs to be addressed. We will formulate {\color{black} probabilistic and graphical} criteria characterising which information can be ignored without introducing bias.
In order to address questions relating to unobservable variables or events, we will `move' between different sets of information: $\calv$ denotes the set of all variables and processes deemed `relevant' to a system, including the censoring process;  this includes baseline variables $X\in \calb$, as well as counting processes $N\in \caln$. Here, `relevant' refers to the requirement that the resulting model is causally valid as formalised in Section \ref{sec:causalval} below.
Throughout, we are concerned with interventions on processes, not in baseline variables; the latter are included to define subgroups of interest, or because they may be needed to adjust for baseline confounding. Note that an intervention on a process does not affect the distribution of baseline variables as a cause has to come before the effect in time.

It may not be possible to  observe the complete set of variables and processes, or we may not be interested in all of these. Thus, $\calv_0=\calb_0\cup\caln_0$ denotes the subset of baseline variables and processes that we focus on and can observe; these exclude $N_t^c$ as inference typically aims at an uncensored situation (to be formalised). We will sometimes use the notation $\dot\cup$ for the union of  pairwise disjoint sets. 
Where appropriate, we explicitly refer to sets of unobserved variables or processes by $\calu$, and to sets of observed processes that are not of primary interest as $\call$. Thus, the analysis is marginal over $\calu$ and the causal parameters may further be marginal over $\call$. 
Note also that often we will not distinguish between baseline variables and processes, as a variable can be understood as a constant process on $t>0$. We will highlight where it is important to distinguish baseline variables from processes.

Throughout,  we use $\calf$ (or $\calf_t$) for a $\sigma$-algebra (or filtration) on the full set of information generated by $\calv$.
To improve readability, we will prefer the notation $\calf$ for the filtration (dropping the subscript $t$) unless time needs to be explicitly referenced.
When a $\sigma$-algebra (filtration) is generated by a subset $A \subset \calv$ of processes we write $\calf^{A}$ ($\calf_t^{A}$), where we also use $\calf^{a,b}$ instead of $\calf^{\{a,b\}}$. Here, we associate processes/variables with their indices  so that the filtration generated by $N^a$ is denoted by $\mathcal F^a$ and the filtration generated by $N^a$ and $N^b$ with $\calf^{a,b}$, etc. Finally,  $P|_{\calf^A}$ denotes the restriction of a probability measure to the reduced set of information $\calf^A$.

\section{Graphical local independence Models} 
\label{sec:LIM}

Graphical local independence models (or, local independence graphs) have been suggested and investigated by \cite{Schweder,Aalen87,didelez2,DidelezRoyalSB,mogensen2020}. They can be seen as stochastic processes' counterpart  to directed acyclic graph models or Bayesian networks  \citep{lauritzen1996}: {  local independence graphs have directed edges, but allow for cycles to represent dynamic feedback}. They graphically encode the probabilistic (in)dependence structure between processes. Unlike Bayesian networks, which represent conditional independencies between variables, the independencies between processes { (the nodes)} are in terms of local independencies. Informally this means that, at any time, the presence of one process does not depend on the past of another process given some other information on the past; hence this is an asymmetric notion of independence \citep{Didelez06}. In the graph below, for instance, we represent that the process $N^a$ is locally independent of the process $N^b$ but not vice versa:
$$
\xymatrix{
N^a\ar@/^/[r]  & N^b \\
}.
$$ 
Note that there would be a {\em second} directed edge between the two nodes, from $N^b$ to $N^a$, if the processes were mutually locally dependent on each other, i.e. if there was {\em feedback}.

In this section, we give  basic background information on local independence graphs  before we formalise the corresponding causal semantics in Section \ref{sec:causalval}.

\subsection{Intensities  and local independence} 

A key concept is that of the intensity  of a counting process $N_t$, which throughout we assume to exist for all counting processes. It establishes the dependence between the process' present, or short-term prediction, and `past information'. Here, we review the concept briefly; a precise mathematical treatment requires tools from martingale theory, see
\cite{AalenGjessingBorgan} or \cite{JacodShiryaev}.

Let $\calf$ be a filtration generated by a set of  variables and processes, e.g. those in a set $\calv$ including $N$, then the $\calf$-intensity of $N$ satisfies
\begin{equation}
	\label{intensity process} 
	    E( dN_t | \mathcal F _{t-} ) = \lambda_t \, dt.
	\end{equation}

Crucially, the intensity  depends on what past information we include. 
For instance, the $\calf$-intensity  and the $\calg$-intensity, for a reduced information $\calg_t\subset \calf_t$, of the same process $N$ are  not necessarily identical; the latter can be obtained through the Innovation Theorem \cite[II.4.2]{Andersen}. 

However, when the intensity for a reduced set of past information does remain the same, we speak of a local independence.
Specifically, consider  {\color{black} $N^a \in \caln$, a set $\calc$ of baseline variables and processes excluding $a$, and a single $b$ indexing a process or a baseline variable. Then we say that $N^a$ is {\em locally independent} of  $N^b$ given $\{a\} \cup \calc$ if the $\calf^{\{a,b\} \cup \calc}$-intensity of $N^a$ is indistinguishable from its $\calf^{\{a\} \cup \calc}$-intensity. }
Thus, local independence formalises the intuitive notion that the short-term prediction of $N^a$ is unchanged when removing information on the past of $N^b$ as long as past information generated by variables and processes in {\color{black} $\calc$} 
and its own past, is given. To make this last point explicit, we always  include the process $N^a$ itself in the conditioning set (note, \cite{DidelezRoyalSB} uses the same concept but slightly different notation).
Alternatively, we can  define local independence using martingales: $N^a$ is locally independent of $N^b$, given {\color{black} $\{a\} \cup \calc$},
if $N^a_{t} - \int_0^{t} \bar \lambda_s ds$ is a local martingale with respect to $\mathcal F^{\mathcal {V}}$ when $\bar \lambda$ is the $\mathcal F^{\color{black} \{a\} \cup \calc}$-intensity of $N^a$. 
We write local independence as $N^b \nrightarrow N^a \,|\, N^{\color{black} (\{a\} \cup \calc)}$ or  $b \nrightarrow a \,|\, {\color{black} (\{a\} \cup \calc)}$ for short. 

As a convention, we always include a process' own history in the filtrations for its different intensities. A detailed treatment of local independence can be found in \cite{Aalen87} and \cite{DidelezRoyalSB}  with further generalisations by \cite{mogensen2020}. 
Some important properties of local independence are (i) that it is asymmetric and (ii) it is relative to the set of given information.  If $N^a$ is locally independent of $N^b$ given {\color{black} $\{a\} \cup \calc$} then this does not necessarily imply the converse, nor does it imply local independence for a subset {\color{black} $\calc_0 \subset \calc$}. 

\subsection{Local independence graphs and models} \label{sec:graphmod}

We now turn to the graphical representation of local independence structures. { For formal details on local independence graphs and models see \citep{Didelez06,DidelezRoyalSB}.}

A graph  $G=(\calv,\cale)$ is  given by a set of vertices (or nodes) $\calv$ and { directed} edges $\cale$; the nodes represent variables or processes; there can be up to two edges between nodes representing dynamic relations. The induced subgraph $G_A$, $A\subset \calv$,  has nodes  $\calv \cap A$ and edges $\cale \cap (A\times A)$; a subgraph $G_A'$ on $A$ is given if the edges are a subset of those of the induced subgraph. Any node $b\in \calv\backslash\{a\}$ with an edge $b\longrightarrow a$ is called a parent of $a$, while $a$ is a child of $b$; graphical ancestors or descendants are defined analogously in terms of sequences of directed edges.
In order to represent local independence structures, a graph should satisfy the following properties:
\begin{itemize}
\item
the node set, $\calv=\calb\cup \caln$, consists of two types, representing either baseline variables or processes;
\item
 all edges are directed;
 \item
 between  two nodes in $\caln$ there may be up to two edges, one in each direction; 
 \item
 between two nodes in $\calb$ there can only be up to one edge; 
\item
 there are no edges pointing from a node in $\caln$ to a node in $\calb$; 
 \item
 on the subset $\calb$ of baseline variables the graph is a directed acyclic graph (DAG). 
\end{itemize}
We call a graph with the above properties {\em local independence graph}.\\

A {\em graphical local independence model}, $(\calp,\calf,G)$, combines the above graph with a class $\calp$ of distributions by demanding that when there is no edge pointing from a given node to a given process then the corresponding local independence must hold for every $P\in \calp$. Among the baseline variables  (i.e.\ for their joint marginal distribution) we  require the usual directed Markov properties of conditional independence graphs to hold  \citep{lauritzen1996}. Formally, let $G$ be a local independence graph  satisfying the above properties. The corresponding graphical local independence model is a class of joint distributions $\calp=\mathcal P (G)$ for all possible outcomes or trajectories of the  nodes in $\mathcal V$ such that, under any $P\in \calp(G)$
\begin{itemize}
\item
 the $\calf$-intensity \eqref{intensity process} is well defined for each counting process in $\caln$;
 \item
 when $(b\rightarrow a) \notin \cale$, and $a\in \caln$, then $N^a$ is locally independent of $N^b$ given $N^{\calv\backslash \{b\}}$; 
 \item
 $P|_{\calf^\calb}$ satisfies the conditional independencies given by the directed Markov properties of the induced subgraph $G_{\calb}$.
\end{itemize}

Under regularity conditions the above definition implies that every counting process $N^a$ is locally independent of its non-parents, conditionally on its closure, defined as cl$(a)=\{a\}\cup\, $pa$(a)$ \citep{DidelezRoyalSB}. This means that the $\calf$-intensity of $N^a$ is indistinguishable from its $\calf^{\mbox{\footnotesize cl}(a)}$-intensity.
In the example (\ref{local_figure_1})  we find that $N^1$ is locally independent of $N^2$ given $N^{1,3}$, and that there are no other local independencies implied by the graph:
\begin{equation} 
\xymatrix{ N^1 \ar@/^1pc/[drr] \ar@/^/[dr] & & \\ &
		N^2 \ar@/^/[r] & N^3 \ar@/^/[l] \ar@/_2.5pc/[ull]
	} \label{local_figure_1} 
\end{equation}
A further example, combining baseline variables and processes, is given in Supplement \ref{appendix:li_examples}.

\subsection{$\delta$-separation }
\label{sec:graphsep}

In order to use graphical local independence models for causal reasoning, and especially to assess identification of causal parameters, we need to be able to read off independencies  retained, or dependencies introduced, when marginalising over possibly unobservable / latent variables or processes, e.g.\ to check if these unobservables could induce confounding bias. For instance in the above example graph (\ref{local_figure_1}), if  $N^3$ were unobservable it could induce a (marginal) local dependence of $N^1$ on  $N^2$. This type of property can be read off from a local independence graph by means of $\delta$-separation \citep{Didelez06,DidelezRoyalSB}, in analogy to $d$-separation for DAGs. Due to the asymmetric nature of local independence, $\delta$-separation must also be asymmetric and is therefore different from $d$-separation. 

Before formally defining $\delta$-separation, we require the notions of  `blocked trail' and `allowed trail'. 
A trail  is a subgraph of $G$ formed by unique vertices
        $\{v_{0}, \dots, v_{m}\} $ and edges $\{e_{1}, \dots, e_{m}\}$ such that either $e_{j} = v_{{j-1}}
\rightarrow v_{j}$ or $e_{j} = v_{{j-1}} \leftarrow v_{{j}} \in \cale$ for every $j = 1, \ldots, m$. The trail is said to start in $v_{0}$ and end in $v_{m}$. As there can be multiple edges between nodes, there can be different trails on the same set of nodes. 
A trail is said to be \emph{blocked} by a set of  vertices $C\subset \calv$ if either (i)	$C$ contains a vertex $v_{j}$, $j\in \{2,\ldots , m-1\}$, on the trail such that $e_j=	v_{j}\rightarrow v_{j+1}$ or $e_{j-1}=v_{j-1}\leftarrow v_{j}$ on the trail (i.e.\ $v_{j}$ is a non-collider), or 
	(ii) the trail contains the edges $v_{j-1}\rightarrow v_{j}\leftarrow v_{j+1}$ such that $C$ contains neither  $v_j$ nor any of its descendants. 
Otherwise, the trail is said to be \emph{open} relative to $C$.  
An {\em allowed} trail from a node $v_0$ to a node $v_m$ in $\caln$ is a trail ending with a directed edge into the node $v_m$, i.e. $e_m=v_{m-1}\rightarrow v_m$.

In DAGs, for disjoint sets  $A,B,C$, we say $A$ and $B$ are  $d$-separated by $C$, if every path between $A$ and $B$ is blocked by $C$; this separation is symmetric in $A$ and $B$. For distributions that satisfy the Markov properties of a DAG, every $d$-separation entails the corresponding conditional independence, i.e. $X_A\ind X_B\mid X_C$ \citep[Theorem 1.2.4]{lauritzen1996,Pearl}. In a graphical local independence model, this is still the case for baseline variables $A,B,C\subset \calb$, but in addition we use $\delta$-separation to read off local independencies as follows \citep{Didelez06,DidelezRoyalSB}.

\begin{definition} \label{def:deltasep}
Let $a \in \caln $ and $B, C\subset \calv$ be disjoint  subsets of vertices in a local independence graph $G$. 
		Then $B$ is $\delta$-separated from $\{a\}$ by $\{a\} \cup C$ if every 
		allowed trail from any $b\in B$ to $a$ is blocked by $C$.
		For a subset $A \subset \caln$, $B$ is $\delta$-separated from $A$ by $A\cup C$
 if every allowed trail from any $b\in B$ to any $a\in A$ is blocked by $(A\cup C)\backslash \{a\}$. We then  write $B \nrightarrow_G A \mid A\cup C$. 
	\end{definition}

The role of $\delta$-separation is in guaranteeing (under regularity conditions) a corresponding local independence in the model \citep[Theorem 1 and 3.4]{DidelezRoyalSB}:  whenever $B \nrightarrow_G A \mid A\cup C$  then the sub-process $A$ is locally independent of the processes (or variables) in $B$, given $A\cup C$. As $A\cup B\cup C$ does not need to equal $\calv$, $\delta$-separation allows us to infer marginal local independencies in subsets of the system. 
Note that the `blocking of allowed trails' condition has an equivalent `moral graph' condition which allows to check $\delta$-separation on an undirected graph \citep{lauritzen1996}; this and additional information on  $\delta$-separation can be found in \cite{Didelez06}.

\begin{remark}
The definition of local independence graph and $\delta$-separation takes as implicit that every process always depends on its own past so that we do not make use of any self-loops and {\em always condition on a process' own past}. 
\cite{mogensen2020} make such a distinction; moreover the authors generalise their treatment to dependencies due to latent processes shown graphically as bi-directed edges and self-loops. The corresponding notion of separation is called $\mu$-separation.
See Supplement \ref{appendix:li_examples} for further examples.
\end{remark}

\begin{remark}\label{faithf}
The above notion of local independence graph ensures that whenever there is a $\delta$-separation in the graph then there is a local independence in the model  $\mathcal P (G)$. If the converse holds, i.e. {\em all} local independencies that occur in every distribution under a probabilistic model $\calp$ can be read off from the graph $G$ via $\delta$-separation, then we say that the model is {\em faithful} to the graph \citep{Meek95}, and the graph corresponding to $\calp$ is then unique.
Faithfulness is especially relevant in the context of causal discovery \citep{mogensen18}.
For our following results, here, we do not require faithfulness; nevertheless, in slight abuse of terminology, we will  simply say `the local independence graph' when we mean the whole local independence model.  
\end{remark}

\section{Causal Validity of Local Independence Models}
\label{sec:causalval}

Graphical local independence models describe the probabilistic dynamic dependence structure of a multivariate counting process (allowing baseline variables). We now combine this with causal semantics. 
{\color{black} To this end, we need to be explicit} about assumptions that link the probabilistic structure with hypothetical interventions under which the data generating process is modified, reflecting e.g.\ the situation of earlier treatment initiation. We proceed in analogy to {\color{black}  the case of random variables linked by a causal DAG model with a factorisation of the distribution and corresponding causal Markov property. Such a causal DAG reflects the causal structure  by demanding that the joint distribution obeys the `manipulation theorem' or `truncation formula' which is a  modification of the factorised distribution \citep{Pearl,sgs:00,Did:chapter}. Below, in Section \ref{sec:causalVal}, we propose an analogous notion for the causal interpretation of local independence graphs in terms of a truncation of the corresponding factorisation which we recall first.}

\subsection{Local characteristics}
Let $\calp (G)$ be a local independence model on a set of counting processes and baseline variables $\calv=\caln\cup \calb$, with $n_0=|\mathcal N|$ and $n=|\calv|$.
{\color{black} The joint distribution of $\mathcal N$ and $\mathcal B$ can be uniquely and  explicitly characterised on the so-called canonical space of a marked point process \citep{jacobsen2006point,LastBrandt1995marked}. This is the domain space of $\mathcal B$ multiplied by the space of all multivariate counting process realisations, so it is not restrictive to take this as the underlying probability space. For instance, if $\mathcal N$ has at most $m$ jumps (uniformly bounded) the distribution function factorizes as follows;
\begin{equation} \label{eq:recFact}
        P(\mathcal N \in d\phi , \mathcal B \in dx  )  = 
        \prod_{i = 1}^n  Z^i (\mu^i(\omega)) d\phi \nu(dx),
        \end{equation}
where $\omega = (\phi, x)$, $\phi$ takes values in a subset of $\mathbb R^{m n_0}$, $d\phi$ is the Lebesgue product measure, and $\mu^1, \dots, \mu^n$ are the {\em local characteristics}. The assumption that $\mathcal N$ has at most $m$ jumps can then be relaxed by taking limits. }
The factors in \eqref{eq:recFact} corresponding to counting processes  $i\in \caln$
can be computed from the $\calf$-intensities $\lambda^i$ and the previous jumps. We have that 
\begin{equation}
Z^i( \lambda^i)  := \prod_{s_i \leq T} \lambda_{ s_i}^i  \exp \left( -\int_0^{T} \lambda_{ s}^i  ds \right), \label{eq: Z}
\end{equation}
where $s_i$ denotes the jump times of the counting process $N^i$. 

The graphical structure is reflected in the fact that, as explained earlier, the above $\calf$-intensities $\lambda^i$ are indistinguishable from the  $\calf^{\text{cl}(i)}$-intensities, i.e.\ those generated by the past on the graphical parent-nodes of a process and its own past. 
Additionally, when $i\in \calb$, the local characteristics are not functions of time and are given by the conditional probabilities $P(X^i\mid X^{\text{pa}(i)})$ as in the factorised density of a Bayesian network \citep{lauritzen1996}. 

\textcolor{black}{A simple example of a local independence graph is
\begin{equation}
       G: \xymatrix{ 
        X \ar[r] & N^1 \ar[r]  & N^2
        } \label{eq: simple local independence graph}
\end{equation}
where $X$ is a baseline variable and $N^1$ and $N^2$ are counting processes. In this case we have that cl$(1) = \{ X, N^1 \}$ and cl$(2) = \{ N^1, N^2 \}$ (recall the notation from Section \ref{sec:graphmod}). 
Thus, the local independencies $X \nrightarrow N^2 | N^{1,2}$ and  $N^2 \nrightarrow N^1 | (N^1,X)$ hold for every distribution in the local independence model $\mathcal P(G)$. 
Letting $\lambda^1$ and $\lambda^2$ denote the $\mathcal F^{X,N^1,N^2}$-predictable intensities of $N^1$ and $N^2$ with respect to some $P \in \mathcal P(G)$, we thus have that $\lambda^1$ defines an $\mathcal F^{ \text{cl(1)}}$-intensity of $N^1$ and $\lambda^2$ defines an $\mathcal F^{\text{cl}(2)}$-intensity of $N^2$ with respect to $P$. 
 The density \eqref{eq:recFact} is given by the product $$Z^1(\lambda^1) \cdot Z^2(\lambda^2) \cdot Z^X(\mu^X), $$
 where $Z^i(\lambda^i)$ is as in \eqref{eq: Z} for $i \in \{1, 2 \}$, and $Z^X(\mu^X)$ is a density of $X$ with respect to the dominating measure $\nu$. 
}

\subsection{Causal validity}\label{sec:causalVal}

In this section, we formalise the notion of causal validity for graphical local independence models. 
Similar to most of the causal frameworks for random variables and causal DAGs, our definition reflects that some aspects of the system are considered invariant (or stable, or modular) under certain interventions on other parts of the system \citep{sgs:00,Pearl,dawid2010identifying,peters:16}. More specifically, we consider a hypothetical intervention on process $N^i$ replacing its $\calf^{\text{cl}(i)}$-intensity $\lambda^i$ by a different intensity $\tilde \lambda^i$ which is typically assumed to be  $\calf^{\calv_0}$-predictable, e.g.\ generated by a subset $\calv_0$, with the special case where it is  predictable just with respect to  its own $\calf^i$-history. 
The latter  mimics the case of randomisation or exogeneity; for example, in the case where $N^i$ counts the times an
individual takes a medication or receives radiotherapy, a possible intervention could be to increase or decrease the frequency regardless of the individual’s history. Letting  the interventional intensity  be  $\calf^{\calv_0}$-predictable allows for dynamic interventions, such as a treatment being (dynamically)  intensified after the occurrence of a specific event such as a side-effect. 

While the original model $\calp$ describes the system's natural behaviour without intervention, we denote with $\tilde \calp$ the model for the system under such an intervention. {\color{black} The latter may be obtained from the former by simple substitution of the local characteristic of node $i$ in \eqref{eq:recFact}
in analogy to the `truncation formula' \citep{Pearl,sgs:00}
as defined next.}

\begin{definition} \label{def:causalvalidity}
Let $\calp(G)$ be a graphical local independence model. {\color{black} Consider an intervention on node $i\in \calv$ (or on set of nodes $A \subset \calv$).
We define the corresponding {\em intervention model} $\tilde \calp(G)$  by} replacing (all) $\mu^i$ by $\tilde \mu^i$ ($i\in A$) while the local characteristics of the remaining nodes remain the same in $\calp(G)$ and $\tilde \calp(G)$. Formally, if  the joint density of a specific $P\in \calp(G)$ is given by (\ref{eq:recFact}), then the corresponding $\tilde P$ is obtained as
\begin{equation} \label{eq:truncFact}
   \prod_{j\in V\backslash A} Z^j(\mu^j,t) 
   \prod_{i\in A} Z^i(\tilde \mu^i,t)
   . 
\end{equation}
\end{definition}

\bigskip

{\color{black} Whenever the above construction of $\tilde P$ from $P$ is judged appropriate in a given real-world context, we say that the model is  {\em causally valid with respect to the intervention on node $i\in \calv$} (or $A \subset \calv$). Ideally, this would be verified by actually carrying out the desired intervention.
In the absence of such experimental validation, subject matter considerations must be used to justify this way of linking the interventional regime $\tilde P$ to the observational $P$ in a given real-world context; causal validity will typically only be plausible if the system with its elements $\calv$ is sufficiently rich, e.g.\ in terms of specifying the relevant underlying mechanisms, as illustrated in Section \ref{examp:failvalid}. }

As the above definition  is based on interventions that replace intensities, it can be regarded as a `weak' notion of causality in contrast to  `strong' notions that are based on replacing equations in structural systems, e.g.\ in causally interpreted stochastic differential equations \citep{sokol14,mogensen18}.

A causally valid local independence model without processes, only baseline variables, reduces to a (locally) causal DAG \citep{Pearl}. Further, in the case of variables, we note that different choices for (\ref{eq:recFact}) can be probabilistically equivalent but imply different causal relations:
while both $P(X^1,X^2)=P(X^1|X^2)P(X^2)$ and $P(X^1,X^2)=P(X^2|X^1)P(X^1)$, only one of the two (if any) factorisations, corresponding to $X_1\leftarrow X_2$ or $X_1\rightarrow X_2$,  can be causally valid, i.e.\ either an intervention on $X^2$ affects $X^1$ or vice versa. For processes, however, the ordering is explicit in `time' so that the main issue is whether the past on the included processes and variables in $\calv$ contains sufficient information to warrant causal validity.

\subsection{\color{black} Example how causal validity can fail}\label{examp:failvalid}
We consider a  simplified clinical situation where a medical condition $U$ occurs with odds $\gamma$. 
This condition can lead to organ failure at some later point in time when the counting process 
$N^d$ jumps.  The patient may receive a treatment $N^a$ to prevent organ failure if $U$ has occurred,  as long as he has not experienced organ failure yet.
More formally we consider 
probability densities $P$ such that: 
$
P( U = 1) / P( U= 0) = \gamma
$, and 
        $\lambda^a_s  := Y_s \alpha U$ and 
        $\lambda_s^d  := Y_s U$
define $P$-intensities for $N^a$ and $N^d$ with respect to $\mathcal F_t^{\{a,d,U\}}$, 
      where 
        $Y_s  :=  I (N^a_{s-}  = N_{s-}^d =  0   )$.
The example is compatible with the following local independence graph:
\begin{equation*} \label{total}
        \xymatrix{ 
        U \ar@/^/[dr] \ar@/^1pc/[drr]& & \\
          & N^a  \ar@/_/[r] & N^d \ar@/_/[l]
        }   
\end{equation*}

 We furthermore assume that we are able to intervene to prevent the use of this treatment, which means that we hypothetically force $\alpha = 0$. Moreover we assume that our model on all three nodes is causally valid with respect to this intervention, i.e. the odds of $U$ and  intensity of $N^d$ remain the same in this hypothetical scenario where the frequencies of events are governed  by $\tilde P$. 

Now assume that the role of $U$ has been overlooked by the analyst, who wrongly thinks that the submodel induced by $N^a$ and $N^d$, i.e. marginalised over $U$, is causally valid with respect to an intervention eliminating treatment.  To see that this is not the case, 
we have to show that, when ignoring $U$, the intensity of $N^d$  is not the same with as without this intervention. 
To see this, we first apply Bayes formula to obtain  
\begin{align*} 
    P ( U  = 1 | Y_s = 0 ) =
    \frac {\gamma } {\gamma  + \frac{P( Y_s = 0| U=0  )}{P( Y_s = 0| U=1  )}   }
    = \frac {\gamma } {\gamma + e^{ s ( \alpha + 1 )}  }.
\end{align*}
A similar argument shows that for the intervened system $\tilde P ( U  = 1 | Y_s = 0 )  = \gamma / (\gamma + e^{ s })$. With this we derive the intensity of $N^d$ in the intervened system; formally, let $\tilde \lambda^d$ be a version of the $\mathcal F_t^{\{a,d\}}$-intensity for $N^d$ w.r.t. $\tilde P$. The Innovation Theorem 
\cite[II.4.2]{Andersen} 
gives  
that 
\begin{align} \label{eq:exampleMIn}
    \tilde \lambda^d_s =  E_{\tilde P } [ \lambda^d_s | \mathcal F_{s-}^{ a, d}  ] = 
E_{\tilde P } [ Y_s U | \mathcal F_{s-}^{ a, d}  ] =  Y_s  \tilde P ( U  = 1 | Y_s = 0 ) = 
\frac {Y_s  \gamma  } {\gamma  + e^ s  }, ~\tilde P \text{ a.s.} 
\end{align}

Without the intervention and still marginally over $U$, 
using the same argument as in
\eqref{eq:exampleMIn}, 
we find that the corresponding intensity is
\begin{align} \label{eq:exampleMInEr}
    \bar \lambda^d_s = \frac { Y_s   \gamma } { \gamma  + e^{ s ( \alpha + 1   )  }}, ~  P \text{ a.s.} 
\end{align}
Hence, we see that \eqref{eq:exampleMInEr} is clearly in conflict with 
\eqref{eq:exampleMIn}, and the induced submodel is not causally valid unless $\alpha = 0$ in the observational scenario, i.e. under $P$. 

The example can be seen as a simple dynamic illustration of confounding: Ignoring the role of $U$ would  lead us to under-estimate the risk of organ failure under a no-treatment intervention because, observationally, only low-risk patients tend to remain untreated. The example demonstrates that causal validity is typically only plausible when the multivariate system of processes and variables considered contains sufficient information on `common causes' even if they are latent, such as the variable $U$ above. 
In the context of causal DAGs, the assumption that there are no omitted variables is known as `causal sufficiency' \citep{sgs:00,hernanrobinsbook}. As we will see in Section \ref{sec:ident}, certain variables or processes can, however, be ignored without destroying causal validity. \textcolor{black}{In Supplement \ref{appendix: example of faithfulness violation} we provide an more academic example where both faithfulness and causal validity are violated. }

\subsection{Re-weighting: Likelihood-Ratios and Positivity}
\label{sec:reweighting}
{\color{black} 
Recall that we want to infer from an observational setting properties under a hypothetical interventional regime. As shown in Proposition \ref{prop:LR}, below, re-weighting will play a particular role in this endeavour.}
The hypothetical regime described by $\tilde \mu^i$ can in principle be arbitrary and should be chosen to suit the considered, practically  relevant, intervention. However, if we want to learn about the hypothetical regime from  data obtained under an observational regime, the two cannot be  `too different' from each other. To formalise this we demand absolute continuity $\tilde P \ll P$, i.e\   for every event $H \in\mathcal F_T$ with  $\tilde P(H) > 0 $ we also have that  $P(H) > 0 $.
The following proposition shows that this is closely linked to the existence of a likelihood-ratio process which re-weights the observational distribution  $P$ into the interventional distribution  $\tilde P$.

\begin{proposition} \label{prop:LR}
Let $\calp(G)$ be a local independence model.
	Consider a hypothetical intervention on component  $N^*$ with $\calf^\calv$-intensity $\lambda^*$, and that this intervention imposes the new intensity $\tilde \lambda^*:= \rho \cdot \lambda^*$ where  	$\rho$ is a non-negative and predictable process. 
   Let 
    \begin{equation}  \label{eq:tdWeight}
    	W_t := \prod_{s \leq t} \rho_s^{\Delta N_s^*} \exp{ \Big\{  - \int_0^t ( \rho_s - 1) \lambda_s^* ds  \Big\}}. 
    \end{equation}
   
   The following statements  are equivalent:
	
	\begin{enumerate}
		\item \label{claim1} $\calp(G)$ is causally valid  with respect to an intervention on $N^*$, and for any $P\in \calp$, with corresponding interventional $\tilde P\in \tilde \calp$, we have $\tilde P \ll P$ on $\mathcal F_T$. 
		
		\item \label{claim2} We have that
		 \begin{equation}
		 	E_{\tilde P}\big(H \big) = 	E_{P}\big(W_t H \big)
		 \end{equation}
		for every $\mathcal F_t$-measurable variable $H$ and $t \leq T$.   
	\end{enumerate}
\end{proposition}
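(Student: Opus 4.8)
The plan is to identify the process $W$ of \eqref{eq:tdWeight} with the likelihood ratio $d\tilde P|_{\calf_t}/dP|_{\calf_t}$ and then to pass between the two statements using the uniqueness of Jacod's factorisation \eqref{eq:recFact} (Theorem III.1.26 in \cite{JacodShiryaev}). The one computational ingredient I would record first is that the ratio of the $N^*$-factors in \eqref{eq:recFact}, before and after the intensity is changed from $\lambda^*$ to $\tilde\lambda^*=\rho\lambda^*$, is exactly the weight process:
\[
\frac{Z^*(\tilde\lambda^*,t)}{Z^*(\lambda^*,t)}=\prod_{s\le t}\rho_s^{\Delta N_s^*}\exp\Big\{-\int_0^t(\rho_s-1)\lambda_s^*\,ds\Big\}=W_t,
\]
where we use that the compensators are absolutely continuous, so $N^*$ jumps by one. (Equivalently, $W$ is the Dol\'eans--Dade exponential of $\int_0^\cdot(\rho_s-1)\,dM_s^*$ with $M_t^*=N_t^*-\int_0^t\lambda_s^*\,ds$; this makes transparent that $W$ is a non-negative $P$-local martingale, hence a supermartingale.)

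For \ref{claim1}$\Rightarrow$\ref{claim2}: causal validity means that, for each $P\in\calp$, the associated $\tilde P$ has for every $t$ the $\calf_t$-density \eqref{eq:truncFact} with $A=\{N^*\}$, i.e.\ $\prod_{j\ne*}Z^j(\mu^j,t)\,Z^*(\tilde\lambda^*,t)$. Dividing this by the $\calf_t$-density $\prod_i Z^i(\mu^i,t)$ of $P$ and cancelling the unchanged factors leaves precisely $W_t$ wherever the $P$-density is positive. The hypothesis $\tilde P\ll P$ on $\calf_T$, hence on every $\calf_t\subset\calf_T$, is exactly what guarantees that this ratio is $\tilde P$-a.s.\ finite and is the genuine Radon--Nikodym derivative $d\tilde P|_{\calf_t}/dP|_{\calf_t}$; this yields $E_{\tilde P}(H)=E_P(W_tH)$ for every $\calf_t$-measurable $H$ and every $t\le T$, and in particular forces $W$ to be a $P$-martingale with $E_P(W_t)=1$.

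For \ref{claim2}$\Rightarrow$\ref{claim1}: taking $H=\mathbf 1_A$, $A\in\calf_T$, in the identity at $t=T$ gives $\tilde P(A)=E_P(W_T\mathbf 1_A)$, so $\tilde P(A)>0$ forces $P(A\cap\{W_T>0\})>0$ and thus $P(A)>0$; hence $\tilde P\ll P$ on $\calf_T$. Moreover the identity says $W_t=d\tilde P|_{\calf_t}/dP|_{\calf_t}$, so the $\calf_t$-density of $\tilde P$ equals $W_t\prod_i Z^i(\mu^i,t)$, which by the display above is $\prod_{j\ne*}Z^j(\mu^j,t)\,Z^*(\tilde\lambda^*,t)$ -- precisely the truncated factorisation \eqref{eq:truncFact}. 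By the uniqueness of the factorisation this identifies $\tilde P$ as the law under which $N^*$ has $\calf$-intensity $\tilde\lambda^*=\rho\lambda^*$ while every other node keeps its local characteristic, i.e.\ $\calp(G)$ is causally valid with respect to an intervention on $N^*$ in the sense of Definition \ref{def:causalvalidity}.

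I expect the main obstacle to be not a hard estimate but the careful bookkeeping around \emph{absolute continuity versus positivity of the Jacod densities}: one must verify that the formal ratio of the two densities is genuinely the Radon--Nikodym derivative on each $\calf_t$ (and not merely a version off some $\tilde P$-null set), which is where $\tilde P\ll P$ enters essentially, and, in the converse direction, that reading off \eqref{eq:truncFact} from the likelihood-ratio identity is legitimate, which rests on the uniqueness part of Theorem III.1.26 in \cite{JacodShiryaev}. A secondary point requiring a line of care is that \ref{claim2} is stated for all $t\le T$ whereas absolute continuity and the factorisation are naturally phrased at $t=T$; consistency across $t$ follows from $\calf_s\subset\calf_t\subset\calf_T$ together with the martingale property of $W$ just derived.
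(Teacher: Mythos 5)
Your proof is correct in substance, but it takes a genuinely different route from the paper's. You work throughout with the Jacod density factorisation \eqref{eq:recFact}: causal validity is read literally as ``the $\calf_t$-density of $\tilde P$ is the truncated product \eqref{eq:truncFact}'', the weight $W_t$ is identified as the ratio $Z^*(\tilde\lambda^*,t)/Z^*(\lambda^*,t)$ of the only factors that change, and the converse is obtained by reading the factorisation back off the likelihood-ratio identity and appealing to the uniqueness of the Jacod representation. The paper instead argues at the level of martingales: for \ref{claim1}$\Rightarrow$\ref{claim2} it notes that causal validity makes each $M^i_t=N^i_t-\int_0^t\lambda^i_s\,ds$ ($i\neq *$) and $\tilde M^*_t=N^*_t-\int_0^t\rho_s\lambda^*_s\,ds$ local $\tilde P$-martingales and then invokes the Girsanov-type Theorem III.5.43 of Jacod--Shiryaev to conclude the likelihood ratio is $W$; for \ref{claim2}$\Rightarrow$\ref{claim1} it verifies directly, by an integration-by-parts computation using that $W$ solves $W_t=1+\int_0^tW_{s-}\,dK^*_s$ and that $K^*$ and $M^i$ are orthogonal (no common jumps, which is where absolute continuity of the compensators enters), that $E_{\tilde P}\bigl(\int_0^Th_s\,dM^i_s\bigr)=0$ for all bounded predictable $h$, i.e.\ that the non-intervened intensities are preserved. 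Your density route is more transparent and mirrors Definition \ref{def:causalvalidity} as stated, and your Dol\'eans--Dade identification of $W$ is a nice addition; its cost is that it leans on the existence of a common dominating measure for the Jacod densities and on the (legitimate but worth stating) converse of Jacod's formula, namely that a density of the form $\prod_i Z^i(\nu^i,t)$ forces the $\nu^i$ to be versions of the actual local characteristics. The paper's martingale route avoids these bookkeeping issues and yields the operationally useful martingale characterisation directly, at the price of the explicit orthogonality computation. You have also correctly spotted the two places where care is needed (positivity of the $P$-density versus $\tilde P\ll P$, and consistency across $t\le T$ via the martingale property of $W$), and your handling of them is adequate.
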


The proposition shows that weighting the events before $t$ according to $W_t$ provides the probabilities in the hypothetical situation under the envisaged intervention.  The causal validity of the system ensures the simple structure of $W$ in (\ref{eq:tdWeight}): it only depends on the local characteristic of the intervened node itself.

For a causally valid system, $\tilde P \ll P$ if and only if the process $W$ is uniformly integrable on $[0,T]$. There exist several different conditions that imply uniform integrability, see \citet*[Theorem IV 4.6]{JacodShiryaev}  and \cite{royslandmsm} or \cite{Kallsen2002} for  more general results.

The conditions that are most relevant for us,  translate into upper boundaries on the predictable processes $\rho_t=\frac{\tilde \lambda_t^*}  {\lambda_t^*}$ and $|\tilde \lambda_t^* - \lambda_t^*|$. Such upper boundaries can also be seen as  generalisation of the positivity condition that is usually assumed \cite[3.3]{hernanrobinsbook}.  Moreover, the weights (\ref{eq:tdWeight}) can further be regarded as continuous-time version of the {\em stabilised inverse probability of treatment weights} for discrete-time marginal structural models \citep{Robins1,Zidovudine}. \textcolor{black}{A weaker positivity condition in a similar context, but still only for a discrete-time setting, has been considered by \citet{kennedy2019}.}

As a consequence (see also  Section \ref{sec:msm}), when 
$W$  is known or identified from the observable data, valid statistical analyses of the hypothetical  scenario may use  weighted averages or weighted regression analyses.
 It may then be desirable to impose further restrictions on how  $\tilde \lambda^i$ and $\lambda^i$ are allowed to differ so that the  weights are well-behaved for stable statistical inference  \citep{royslandmsm,roysland:AOS}.

\section{Causal Validity under Marginalisation}
\label{sec:ident}
{\color{black} 
We now turn to the question of when causal quantities are identified even though certain processes are unmeasured, i.e. when the data only carry information on a marginalised system.
}
Identifiability is at the core of many statistical problems, especially missing data, latent variables and causal inference  problems \citep{manski2003partial,PearlShpitser}.  
In brief, a parameter $\xi$, being a function of a distribution $P$ in a model $\calp$, is said to be identifiable from incomplete information $\calg$
if it can  uniquely be determined from $P|_{\mathcal G}$ for every $P$ in $\mathcal P$ (see Supplement \ref{appendix: identifiability} for  a formal treatment of identifiability).

A causal parameter $\xi$ is now a function of the interventional distribution $\tilde P\in \tilde \calp$ resulting from replacing the local characteristics of a given node, cf.\ Definition \ref{def:causalvalidity}. For example, we may be interested in some aspect of the survival curve under a specific intervention on some treatment process. A causal parameter always induces another parameter $\tilde \xi$ in the original model obtained by first mapping $P$ into $\tilde P$ using the above likelihood-ratio (\ref{eq:tdWeight}), and then applying $\xi$, i.e. $\tilde \xi (P)=\xi (\tilde P)$. Thus,  a causal parameter is  identifiable from incomplete information $\calg$ if $\tilde \xi$ is  identifiable from  $\calg$.

\subsection{Eliminability}

The following definition of eliminable processes  characterises graphically when certain subprocesses (or  baseline variables) can safely be ignored without destroying certain aspects of the causal local independence structure.
 It combines the notions of `sequential randomisation' and `sequential irrelevance' of \cite{dawid2010identifying}. 
Eliminability  is  then  used to establish our key results on identifiability.
 
 We require some notation, first. Here, as before, we are interested in intervening on a  {\em process} $N^*$ and consider its effect on a set of outcome {\em processes} $\caln_0$, both exclude baseline variables. Moreover, 
 an intervention on $N^*$ does not affect baseline variables as the intervention takes place after baseline. Hence, the marginal distribution on the  baseline variables $\calb$ is the same as the corresponding marginal under an intervention, i.e.  $P|_{\calf^\calb}=\tilde P|_{\calf^\calb}$. 
 However, we allow for baseline  variables as well as processes in the set over which we want to marginalise.


\begin{definition} \label{def:eliminU}
Let $G$ be a local independence graph with nodes $\calv= \calv_0 \dot \cup \, \calu$ for $\calv_0 = \caln_0 \cup \calb_0$ with $\caln_0\subset\caln$ and $\calb_0\subset\calb$; let $N^*\in \caln_0$, and  $\caln_0^{\backslash*}=\caln_0\backslash \{ N^* \}$. 
Then we say that, in $G$, the set $\calu$ is {\em eliminable} with respect to $(N^*, \calv_0^{\backslash*})$ if it can be partitioned into a sequence of sets $\calu_1, \ldots, \calu_K$ such that, for each $k=1, \ldots, K$, either
\begin{equation}
 \calu_k 
 \nrightarrow_{G} (\caln_0^{\backslash*}\, {\color{black}, \bar \calu^{k+1}})
  \,|\, (\calv_0, \bar \calu^{k+1})
 \label{prop1(i)}
\end{equation}
or  
\begin{equation}
\calu_k  
\nrightarrow_{G} N^*
\,|\, ( \calv_0, \bar \calu^{k+1}),
 \label{prop1(ii)}
\end{equation}
holds. Here $\bar \calu^{k+1}= (\calu_{k+1}, \ldots, \calu_K)$ and  $\bar \calu^{K+1}= \emptyset$.
\end{definition}

{\color{black} Note, when $\bar \calu^{k+1}$ contains baseline variables, then  (\ref{prop1(i)}) has to be understood as $d$-separation between $(\calu_k \cap \calb)$ and $(\bar \calu^{k+1}  \cap \calb)$ gven $(\calv_0 \cap \calb)$ in addition to the $\delta$-separation  $\calu_k 
 \nrightarrow_{G} (\caln_0^{\backslash*} , \bar \calu^{k+1} \cap \caln)
  \,|\, (\calv_0, \bar \calu^{k+1})$.}

The theorem below shows that causal validity regarding an intervention on $N^*$ is retained when ignoring (i.e.\ marginalising over) eliminable sets $\calu$. This is related to and a considerable generalisation of the `non-informative treatment assignment' proposed by \cite{ArjasParner04} in the context of marked point processes; the property of eliminability also has some similarity to some principles of the selecting covariates to adjust for confounding in DAGs \citep{VanderweeleNew,Witte2019}. 
An immediate implication is that in case of eliminability, the likelihood-ratio in the subsystem and hence any corresponding causal parameter is identified from $\calf^{\calv_0}$ (see section \ref{sec:weights} below). 
When processes are unobservable it can be helpful to reassure ourselves that they are eliminable to ensure  identifiability from observables.

\begin{theorem} \label{theorem:omitU}
Consider a local independence model $\calp(G)$. Let the nodes be partitioned as in Definition \ref{def:eliminU}.  
Assume causal validity with respect to an intervention on the  process $N^*\in \caln_0$, replacing its $\calf^{\calv}$-intensity $\lambda^*$ by a $\calf^{\calv_0}$-intensity $\tilde \lambda^*$.

If $\calu$ is eliminable with respect to $(N^*,\calv_0^{\backslash*})$ in $G$, then the model restricted to $\calf^{\calv_0}$   (i.e.\ marginally over $\calu$) is also causally valid with respect to the same intervention. 
\end{theorem}

The conditions (\ref{prop1(i)}) and (\ref{prop1(ii)}) {\color{black} are sufficient for processes to be } ignored without destroying causal validity. Consider for instance the local independence graph
$$
U^1 \rightarrow N^* \; \substack{\leftarrow\\ \rightarrow} \; N^y \leftarrow U^2,
$$
assuming causal validity with respect to an intervention on $N^*$. First, by $\delta$-separation (\ref{prop1(i)}) we see that  $N^y$ is locally independent of $U^1$ given $(N^*,U^2,N^y)$. Second, with (\ref{prop1(ii)}), we find that $N^*$ is locally independent of $U^2$ given $N^y,N^*$. Hence, the submodel on $(N^*,N^y)$ is causally valid with respect to an intervention on $N^*$ as long as this intervention does not depend on $(U^1,U^2)$; the model essentially asserts that there is no confounding of $(N^*,N^y)$ regardless of whether $(U^1,U^2)$ are observed.

Further basic examples  where $U$ can be ignored, are described by the local independence graphs
$$
N^* \leftarrow U \leftarrow N^y
\quad \quad \quad \quad 
N^* \rightarrow U \rightarrow N^y.
$$
In the first case condition (\ref{prop1(i)}) holds, in the second condition (\ref{prop1(ii)})  of Theorem \ref{theorem:omitU}. Here, $U$ could also be  a sequence of such processes on directed paths. More  examples for local independence models with different structures of eliminable processes can be found in Supplement \ref{appendix:li_examples}.

\begin{remark}\label{rem:mogensen}
Our result on eliminability is related to the marginalisation considered by \cite{mogensen2020}. The authors propose an extended class of local independence graphs, and corresponding $\mu$-separation, which is closed under marginalisation. These more general graphs include bi-directed edges as a possible result of latent processes not shown as nodes in the graph. In our case, if we  consider $\calu$ as latent processes and if they satisfy the conditions of eliminability, then they do not induce any bi-directed edges with endpoints between $N^*$ and $\calv_0^{\backslash*}$ in these more general graphs. Moreover, their results can be used to obtain the `latent projection' graph  representing the local independence structure after marginalising over $\calu$ \citep[Definition 2.23 and Theorem 2.24]{mogensen2020}. 
In the above three examples these would be
$N^*\; \substack{\leftarrow\\ \rightarrow} \; N^y$, $N^* \leftarrow N^y$ and $N^* \rightarrow N^y$, respectively (albeit with bi-directed self loops).
However, in general it does not hold that the latent projection over eliminable nodes corresponds to the induced subgraph on the remaining nodes as bi-directed edges could occur between nodes within $\caln_0$. Latent projections of causal graphs have been used to identify valid adjustment sets \citep{witte:20} to which we return in section \ref{sec:msm}. We briefly comment on the projection graphs for the examples (\ref{eq: figure BB},\ref{eq: figure CC}) in Supplement \ref{appendix:li_examples}.
\end{remark}

\subsection{Identifiability and Likelihood-Ratio}
\label{sec:weights}

The above  Theorem \ref{theorem:omitU} immediately implies that together with $\tilde P \ll P$  the likelihood-ratio  
$$
\frac{d\tilde P|_{\mathcal F^{\calv_0}_t} } {d
        P|_{\mathcal F^{\calv_0}_t} }, 
$$
coincides with the weights of equation (\ref{eq:tdWeight})  with $\lambda_{ t}^*$ being the $\mathcal F^{\calv_0}$-intensity of $N^*$. 
Hence, as a key implication of Theorem \ref{theorem:omitU} we obtain that any causal parameter $\xi$ that is a function only of $\tilde {P}|_{\mathcal F^{\calv_0}}$ is identified from ${\mathcal F^{\calv_0}}$ without requiring information on $\calu$. 

Continuing the above example with graph  $U^1 \rightarrow N^* \; \substack{\leftarrow\\ \rightarrow} \; N^y \leftarrow U^2$: Assume $N^*$ is a process indicating  start of treatment and $N^y$  a disease process, e.g. indicating a cardiovascular event. Then $U^1$ might be a process that affecting the availability of the treatment but nothing else, e.g.\ a shortage in the pharmacy; $U^2$ might relate to events that affect the disease process but not the availability of, or decision to start, treatment, e.g.\ a change at the job. If we are interested in the effect of, say, early versus late  start of treatment on cardiovascular problems, we wish to ignore $U^1$ and $U^2$. Then, assuming the local independencies implied by the $\delta$-separations in the graph hold, and that $G$ is causally valid,  Theorem \ref{theorem:omitU} tells us that the reduced system $N^* \; \substack{\leftarrow\\ \rightarrow}\;  N^y$ is also causally valid and the likelihood-ratio is identified without information on $(U^1,U^2)$. Hence we can identify the desired causal effect by re-weighting according to \eqref{eq:tdWeight} using the $\calf^{N^*,N^y}$--intensity of $N^*$ ignoring $(U^1,U^2)$.

\section{Identifiability under Censoring}
\label{sec:cens}
{\color{black} 
While the above deals with unmeasured processes, in time-to-event settings information is often incomplete due to right censoring because, e.g.,  follow-up time of studies is limited. 
}
The issue of identifiability under right censoring can be seen from two subtly different angles: The approach via filtrations \citep{AalenGjessingBorgan} typically assumes there are processes and baseline variables of interest $\calv_0$ and a separate censoring process $N^c$, where $\calf_t^{\calv_0\cup N^c}$ is the filtration jointly generated by $\calv_0 \cup \{N^c\}$, $\calf_t^{\calv_0}$ is generated by $\calv_0$ alone, i.e.\ with no information on being censored, and $\calf_{t\wedge C}^{\calv_0\cup N^c}$ being generated by the observable processes, i.e.\ with everything that is censored being `invisible'. Identification is then about the possibility to use only information in $\calf_{t\wedge C}^{\calv_0\cup N^c}$ to infer quantities such as intensities defined with respect to $\calf_{t}^{\calv_0}$. However, this presupposes that it is self-evident what real-life situation $\calf_{t}^{\calv_0}$  represents, i.e.\ how there can be no censoring. If censoring simply occurs due to the end of follow-up, then one can say that $\calf_{t}^{\calv_0}$ refers to the patients' health and lives regardless of whether they are being observed in a study or not, and hence regardless of end of follow-up. This point of view  underlies the motivation of the  assumption of {\em independent} censoring \citep{Andersen,AndersenEncy}.  

However, when other types of events, such as `death from other causes', `treatment switching', or even just `drop-out' (often combined with a change in medical care) are considered as censoring events then it becomes less clear to what kind of situation the no-censoring filtration $\calf^{\calv_0}$ refers, let alone whether that filtration represents something practically meaningful. A different angle has therefore sometimes been adopted not only for survival analyses, but also in related problems such as missing data, drop-out or competing events \citep{young_causal_2020,Farewell17,hernanrobinsbook}: 
Assume  an overall model $P$ for $\calv$ including $\calv_0$ and $\{N^c\}$; this  is now modified to $\tilde P$ with $\lambda^c$ replaced by $\tilde \lambda^c\equiv 0$ while all other local characteristics remain the same. In other words, we assume that a sufficiently rich system $\calv$ can be conceived, such that the situation of interest, without censoring, can formally be described by a hypothetical intervention that sets the censoring intensity to zero within a causally valid system (cf.\ Definition \ref{def:causalvalidity}). This might imply a much more fundamental change than simply hiding or revealing information. We believe that this approach based on `preventing' censoring facilitates reasoning about the structural assumptions that allow identifiability of quantities in the uncensored situation, and would also highlight when censoring by certain types of events may not be practically meaningful.
Even if censoring is simply an inability to observe the system but does not in itself affect the system, then  violation of independent censoring can occur because censoring is {\em indirectly} informative for some hidden or ignored processes, and this can also easily be read off from local independence graphs thus alerting us to such a violation.

\subsection{Independent Censoring and Causal Validity}

In this section we link independent censoring to local independence, so that it can be read off from local independence graphs.  Further, we consider a hypothetical intervention on the system to {\em prevent} censoring; hence we give further conditions for identifiability  invoking causal validity. 

Independent censoring is often used informally or confused with stochastic independence between processes. Here, following \cite{AndersenEncy},  we formulate it in terms of local independence. 

\begin{definition}\label{def:IC}
Let $\mathcal P$ be a local independence model with sets of variables or processes $A \subset \caln, B\subset \calv$, and $N^c$ representing the counting process for censoring events.  Then censoring is said to be independent for  $A$, given $B$, if $A$ is locally independent of $N^c$ given $A\cup B$. In the special case where $A=\caln\setminus\{N^c\}$ and $B=\calb$, then we say that the whole model satisfies independent censoring.
\end{definition}

\begin{remark}\label{rem:indepcens}
As local independence can be read off from a local independence graph via $\delta$-separation, the above can be checked graphically. Let ${\mathcal P} (G)$ be a local independence model on a graph $G$ with sets of nodes $A \cup B \cup N^c \subset \calv$. Censoring is independent for  $A$, given $A\cup B$, if $N^c$ is $\delta$-separated from $A$ by $A\cup B$ in $G$. The whole model on $\calv$ satisfies independent censoring if the node $N^c$ has no children in $G$. Moreover, the submodel induced by $A\cup B\cup  \{N^c\} $ is subject to independent censoring if $N^c$ is $\delta$-separated from $A\cup B$ by $A\cup B$ alone in $G$ (or,  more explicitly, if $N^c\nrightarrow_G \caln \cap (A\cup B) \mid {A\cup B}$).
\end{remark}

In Supplement \ref{appendix: independent censoring and identifiability} (Lemma \ref{lem:newDecens}) we prove that  independent censoring ensures that the  intensities with respect to the uncensored filtration  are identifiable.
We further argue that as long as there is a non-zero probability to observe the event before censoring (e.g.\ by the end of follow-up) there exists \emph{de-censoring} maps $\zeta$ to obtain  $P$ on  $\calf_{t}^{\calv_0}$ from information restricted to $\calf_{t\wedge C}^{\calv_0 \cup N^c}$ (see Remark \ref{remark:decens} in Supplement \ref{appendix: independent censoring and identifiability}), i.e. 
\begin{align}
    \zeta(P|_{\calf_{T\wedge C}^{\calv_0 \cup N^c}}) =  P|_{\calf_{T}^{\calv_0}}. \label{eq: zeta map txt}
\end{align}
The above formulation via the de-censoring map is very general; in practice it is typical to show that a particular method of estimation is consistent for the desired parameter under independent censoring within a (semi-)parametric model. For our purposes, we choose to stay with the more general framework of identification just assuming the existence of intensities.  

However, as can be seen from the example {\color{black} in Supplement \ref{appendix: example of faithfulness violation}, re-interpreting $N^A$ as censoring process, we note that the system $(N^A, N^D)$ satisfies independent censoring but that this is not sufficient to  ensure identifiability  for a system in which censoring is {\em prevented} as it would yield the wrong intensity of $N^D$.} We  need the additional causal validity with regard to an intervention on censoring, as  formalised next.

 \begin{proposition} \label{proposition:indCens}
         Let $\mathcal P(G)$ be a local independence model on $\cal V$ subject to independent censoring with bounded intensity for censoring; additionally  assume it is causally valid with
         respect to an intervention that {\em prevents} 
         censoring. 
         
If $N^c$ is $\delta$-separated from $\caln_0$ given $\calv_0$ in $G$, then the marginal model on ${\calv_0 \cup  \{N^c\}}$ satisfies independent censoring and retains causal validity with respect to the same intervention on $N^c$.  
\end{proposition}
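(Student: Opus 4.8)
I will prove the two conclusions in turn. For independent censoring of the marginal model the argument is direct: by hypothesis $N^c$ is $\delta$-separated from $\caln_0$ by $\calv_0$, and since $\calv_0=\caln_0\cup\calb_0$, the $\delta$-separation theorem \citep[Theorem 1 and 3.4]{DidelezRoyalSB} gives that in the model restricted to $\calf^{\calv_0\cup\{N^c\}}$ the sub-process $\caln_0$ is locally independent of $N^c$ given $\calv_0$, i.e.\ the $\calf^{\calv_0\cup\{N^c\}}$-intensity of each process in $\caln_0$ coincides with its $\calf^{\calv_0}$-intensity. By Definition \ref{def:IC}, applied with $A=\caln_0$ and $B=\calb_0$, this is exactly the assertion that the marginal model on $\calv_0\cup\{N^c\}$ satisfies independent censoring.

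For the preservation of causal validity the plan is to cast the situation as an instance of Proposition \ref{pro:omitU}. Write $\calv_0^+:=\calv_0\cup\{N^c\}$ and $\calu:=\calv\setminus\calv_0^+$, and apply that proposition with $\calv_0^+$ in the role of ``$\calv_0$'' (so $\caln_0\cup\{N^c\}$ plays ``$\caln_0$'' and $\calb_0$ plays ``$\calb_0$''), with intervened process $N^*:=N^c$, and with $\calu$ the set marginalised over. Its hypotheses hold: causal validity of $\calp(G)$ under the intervention preventing censoring is assumed; the enforced intensity $\tilde\lambda^c\equiv0$ is trivially $\calf^{\calv_0^+}$-predictable; and boundedness of $\lambda^c$ makes the weight (\ref{eq:tdWeight}) associated with $\rho\equiv0$ a bounded — hence uniformly integrable — process, so the change of measure of Proposition \ref{prop:LR} is legitimate for $\calp(G)$ and, since the reduced $\calf^{\calv_0^+}$-intensity of $N^c$ is again bounded, for the marginal model too. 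It therefore remains to check that $\calu$ is \emph{eliminable} with respect to $(N^c,\calv_0)$ in $G$; Proposition \ref{pro:omitU} then delivers causal validity of the model restricted to $\calf^{\calv_0^+}$ under the same intervention, which is the claim.

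The main work is exhibiting a partition of $\calu$ witnessing eliminability. I would take $\calu=\calu_{\mathrm e}\,\dot\cup\,\calu_{\ell}$, ordered so that $\calu_{\ell}$ comes last, where $\calu_{\ell}$ collects the nodes of $\calu$ that reach $N^c$ along a directed path contained in $\calu\cup\{N^c\}$ and $\calu_{\mathrm e}$ is the rest. Condition (\ref{prop1(ii)}) for $\calu_{\mathrm e}$, namely $\calu_{\mathrm e}\nrightarrow_G N^c\mid(\calv_0^+,\calu_{\ell})$, is the easy case: any allowed trail from a node of $\calu_{\mathrm e}$ to $N^c$ has length at least two (otherwise its source would be a $\calu$-parent of $N^c$, hence in $\calu_{\ell}$) and its penultimate vertex is a parent of $N^c$ — a non-collider on the trail, lying in $\calv_0$ or, being a $\calu$-parent of $N^c$, in $\calu_{\ell}$, so in the conditioning set — and hence blocks the trail. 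The hard case is condition (\ref{prop1(i)}) for $\calu_{\ell}$, namely $\calu_{\ell}\nrightarrow_G\caln_0\mid\calv_0^+$, where the hypothesis $N^c\nrightarrow_G\caln_0\mid\calv_0$ and the childlessness of $N^c$ must both be used. Childlessness means $N^c$ has no descendants and is a collider on every trail through it, which lets one reduce to a shortest unblocked allowed trail from $\calu_{\ell}$ to $\caln_0$ and then argue it cannot revisit $N^c$; prepending the $\calu$-internal directed path from its source to $N^c$ (with the usual trail surgery removing loops) produces an allowed trail from $N^c$ to $\caln_0$, and one verifies it is still unblocked by $\calv_0$, contradicting the hypothesis. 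The delicate point — which I expect to be the principal obstacle — is the treatment of collider vertices on the trail that become ``open'' only because $N^c\in\calv_0^+$ is conditioned on: each such vertex is itself a $\calu$-ancestor of $N^c$, and one must re-root the construction at it using its own directed path into $N^c$, ensuring this recursion terminates and that the final trail from $N^c$ to $\caln_0$ remains open relative to $\calv_0$.
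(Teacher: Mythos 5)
Your first conclusion (independent censoring of the marginal model) is correct and is obtained exactly as in the paper: the $\delta$-separation $N^c\nrightarrow_G \caln_0\mid\calv_0$ yields the local independence via \citep{DidelezRoyalSB}, which is Definition \ref{def:IC} for the marginal model. The problem is the second conclusion. Your reduction to Proposition \ref{pro:omitU} requires you to \emph{prove} that $\calu=\calv\setminus(\calv_0\cup\{N^c\})$ is eliminable with respect to $(N^c,\calv_0)$, and the crux of that — condition \eqref{prop1(i)} for your block $\calu_\ell$, i.e.\ $\calu_\ell\nrightarrow_G\caln_0\mid\calv_0\cup\{N^c\}$ — is exactly the step you leave as a sketch. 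The difficulty you flag is real: an allowed trail from $\calu_\ell$ to $a\in\caln_0$ may be open relative to $\calv_0\cup\{N^c\}$ only because colliders on it are ancestors of $N^c$; your surgery must splice in their directed paths to $N^c$, and these paths can intersect the trail, can pass through $a$ itself (so that the spliced walk is no longer a trail), and generate a recursion whose termination and whose openness relative to $\calv_0\setminus\{a\}$ you do not establish. As written, the argument for the main conclusion is incomplete.

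The paper avoids this combinatorics entirely. Since censoring is independent with respect to $\calv$, i.e.\ $\mathrm{ch}(N^c)=\emptyset$, the $\calf$-intensity $\lambda$ of any $N\in\caln\setminus\{N^c\}$ is $\calf^{\calv\setminus N^c}$-predictable; causal validity says $\lambda$ is also its $\calf$-intensity under $\tilde P$, so Lemma \ref{lemma:moving down} (applied under both $P$ and $\tilde P$) gives that $\lambda$ is the common $\calf^{\calv\setminus N^c}$-intensity. Together with invariance of the baseline distribution this forces $P|_{\calf_T^{\calv\setminus N^c}}=\tilde P|_{\calf_T^{\calv\setminus N^c}}$, hence $P|_{\calf_T^{\calv_0}}=\tilde P|_{\calf_T^{\calv_0}}$, so the $\calf^{\calv_0}$-intensities of all $N\in\calv_0$ agree under $P$ and $\tilde P$ by a short martingale computation. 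Only then is the $\delta$-separation used, to upgrade $\calf^{\calv_0}$- to $\calf^{\calv_0\cup N^c}$-intensities under both measures, which simultaneously yields independent censoring of the marginal model and causal validity of the restriction. In short: the hypothesis that $N^c$ is childless makes the intervention on $N^c$ invisible on $\calf^{\calv\setminus N^c}$, so no eliminability argument is needed; if you want to keep your route, you must supply a complete proof of the graph-theoretic claim, which is considerably harder than the direct argument.
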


The proof is given in Supplement \ref{appendix: proof of proposition indCens}.
Put together, the above results mean that when we have independent censoring we can recover the uncensored $\calf^{\calv_0}$-intensities from censored data, and if additionally we have causal validity with regard to an intervention that prevents censoring then these are also the intensities under the interventional distribution $\tilde P$ where censoring is prevented. Note that the results are general in that they do not rely on any particular structure for the intensities other than that they exist.

\subsection{`Randomising' Censoring }
{\color{black} Here we discuss a slight generalisation of the above that will make estimation more efficient. In fact,
Proposition \ref{proposition:indCens} remains true for an intervention on $N^c$ that imposes a different $\calf^{\calv_0\cup N^c}$-intensity for censoring (cf.\ the proof in the supplement); }`preventing' censoring is just a special case.
We will loosely refer to censoring interventions that impose an $\calf^{\calv_0 \cup N^c}$-intensity as `randomising' censoring. This may not be an intervention that is in itself of  practical relevance, but, as mentioned in section \ref{sec:reweighting}, leads to more stable weights such as (\ref{cx.weights}) in the following section. 
In other words, for re-weighting we want to consider intervention intensities that are `not too different' from the observational censoring intensity. The intervention that prevents censoring, corresponding to setting the censoring intensity to zero, may not yield efficient estimation. Formally, we posit the following.

\begin{remark} \label{rem:rondomcens}
We assume that if a model is causally valid with respect to an intervention preventing censoring, then it is also causally valid with respect to an intervention  randomising censoring.
\end{remark}

The following corollary justifies that we can essentially equate the two types of interventions on censoring, as an intervention that randomises censoring yields the same hypothetical distribution as one  that prevents censoring when restricted to $\calf^{\calv_0}$.

\begin{corollary} \label{corollary: prevent censoring}
Let $\calp(G)$ be as in Proposition \ref{proposition:indCens} and assume that the model is causally valid with respect to prevention of censoring, with $P^p$ being the model in which censoring is prevented. Let $P^r$ be a  model where we have imposed an $\calf^{\calv_0\cup N^c}$-intensity for censoring. Then, we have that 
$$ 
P^p|_{\calf_{t}^{\calv_0}} =  P^r|_{\calf_{t}^{\calv_0}} = P|_{\calf_{t}^{\calv_0}}. 
$$
In particular, every parameter on such hypothetical measures restricted to $\calf^{\calv_0}$ is invariant  with respect to the choice of censoring intervention. 
\end{corollary}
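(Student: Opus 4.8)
The plan is to reduce everything to the marginal model on $\calv_0\cup\{N^c\}$, where causal validity confines the effect of any censoring intervention to a single factor of the recursive factorisation \eqref{eq:recFact}; that factor then integrates out when we pass to $\calf^{\calv_0}$, and what remains is the same in all three measures. First I would invoke Proposition \ref{proposition:indCens}, together with Remark \ref{rem:rondomcens} and the fact (noted just after Proposition \ref{proposition:indCens}) that the latter also applies to interventions imposing an $\calf^{\calv_0\cup N^c}$-intensity for $N^c$: the marginal model on $\calv_0\cup\{N^c\}$ is subject to independent censoring and is causally valid both with respect to the prevention intervention, which yields $P^p$, and with respect to the randomisation intervention, which yields $P^r$. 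Writing the joint density of this marginal model as $\prod_{i\in\calv_0}Z^i(\mu^i,t)\cdot Z^c(\mu^c,t)$ via \eqref{eq:recFact}, Definition \ref{def:causalvalidity} gives that the densities of $P^p$ and of $P^r$ are obtained by replacing \emph{only} the censoring factor $Z^c(\mu^c,t)$, by $Z^c(\tilde\mu^c,t)$ with $\tilde\lambda^c\equiv0$ and with the imposed intensity respectively; the factor $\prod_{i\in\calv_0}Z^i(\mu^i,t)$ is the same in $P$, $P^p$ and $P^r$. Boundedness of $\lambda^c$ (and of the imposed intensity) makes the weights \eqref{eq:tdWeight} uniformly integrable, so that $P^p\ll P$ and $P^r\ll P$ on $\calf_T$ by Proposition \ref{prop:LR} and these densities are well defined.

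Second, I would use independent censoring of the marginal model, i.e.\ $N^c\nrightarrow_G\caln_0\mid\calv_0$, which says that the $\calf^{\calv_0\cup N^c}$-intensities of the processes in $\caln_0$ coincide with their $\calf^{\calv_0}$-intensities and that the characteristics of $\calb_0$ are functions of baseline variables only. By Jacod's formula this forces $\prod_{i\in\calv_0}Z^i(\mu^i,t)$ to depend on the $\calv_0$-trajectory alone and to be exactly the marginal density of that trajectory, while the remaining factor $Z^c(\mu^c,t)$ is, for each fixed $\calv_0$-trajectory, the conditional density of the $N^c$-path given it; the same holds for $Z^c(\tilde\mu^c,t)$ for any $\calf^{\calv_0\cup N^c}$-predictable bounded censoring intensity, including $\tilde\lambda^c\equiv0$. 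Restricting $P$, $P^p$ or $P^r$ to $\calf_t^{\calv_0}$ amounts to integrating its density over all $N^c$-trajectories; since a conditional density integrates to one, each of the three restricted densities equals $\prod_{i\in\calv_0}Z^i(\mu^i,t)$, whence $P^p|_{\calf_t^{\calv_0}}=P^r|_{\calf_t^{\calv_0}}=P|_{\calf_t^{\calv_0}}$. As the imposed intensity for $P^r$ played no role beyond predictability and boundedness, the closing sentence of the corollary follows at once.

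I expect the main obstacle to be the second step: making precise that, under independent censoring, the product of the $\calv_0$-factors in \eqref{eq:recFact} really is the marginal law of the $\calv_0$-trajectory and that $Z^c$ is a bona fide conditional density integrating to one, so that it disappears under marginalisation no matter which censoring intensity is in force. This is the standard but delicate interplay between local independence and the recursive (Jacod) factorisation, and it is essentially the de-censoring statement of Lemma \ref{lem:newDecens} transported to the marginal graph. A density-free alternative with the same crux is to apply Proposition \ref{prop:LR} in the marginal model to obtain $E_{P^p}(H)=E_P(W_t^pH)$ and $E_{P^r}(H)=E_P(W_t^rH)$ for $\calf_t^{\calv_0\cup N^c}$-measurable $H$, and then to show $E_P(W_t^p\mid\calf_t^{\calv_0})=E_P(W_t^r\mid\calf_t^{\calv_0})=1$ using independent censoring, which for $\calf_t^{\calv_0}$-measurable $H$ immediately gives the asserted equality of the restricted measures.
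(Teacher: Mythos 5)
Your proposal is correct, and its first half---invoking Proposition \ref{proposition:indCens} together with Remark \ref{rem:rondomcens} to obtain causal validity and independent censoring of the marginal model on $\calv_0\cup\{N^c\}$ under both interventions---is exactly the paper's opening move. Where you diverge is the final step. The paper stays entirely at the level of intensities: causal validity of the restricted model means the $\calf^{\calv_0\cup N^c}$-intensity $\lambda$ of each $N\in\calv_0$ is unchanged by either intervention; independent censoring makes $\lambda$, by definition of local independence, also the $\calf^{\calv_0}$-intensity under $P$, $P^p$ and $P^r$; and the conclusion follows because a distribution restricted to $\calf^{\calv_0}$ is uniquely determined by these intensities together with the (intervention-invariant) baseline law. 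You instead pass to the Jacod factorisation, argue that only the censoring factor $Z^c$ is replaced, and then integrate it out over the $N^c$-trajectories (or, in your density-free variant, show $E_P(W_t\mid\calf_t^{\calv_0})=1$). That route is valid, and you correctly identify its crux: establishing that the $\calv_0$-factors constitute the marginal law of the $\calv_0$-trajectory and that $Z^c$ is a bona fide conditional density integrating to one is essentially the de-censoring content of Lemma \ref{lem:newDecens}, and it is the delicate part. The paper's argument buys precisely the avoidance of that step---uniqueness of the distribution given its intensities replaces any explicit marginalisation of densities, so no positivity-type care is needed; your version, in exchange, makes the mechanism (the censoring factor washing out) explicit and connects the result directly to the truncation-formula intuition familiar from discrete-time causal DAGs.
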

 
Together with Lemma \ref{lem:newDecens} (Supplement \ref{appendix: independent censoring and identifiability}) we have that the $\calf^{\calv_0}$-intensity of every $N \in \mathcal V_0$ is identifiable by the observable censored information {\em and retains its interpretation under a hypothetical scenario where censoring can be prevented}. In this case, we do not need to use re-weighting to identify parameters under prevention of censoring.
However, in the following Section \ref{sec:msm} we consider the case where independent censoring does not necessarily hold with respect to $\calv_0$ but may require further information, which can only be marginalised out after re-weighting.

\section{Marginal Structural Models and Censoring}
\label{sec:msm}

{\color{black} 
In this section we combine and further generalise the previous results. We wish to draw inference on the effect of a hypothetical intervention on a treatment or exposure process $N^x$ on one or more outcome processes $\caln_0$ under a further intervention that prevents censoring. The set $\caln_0$ could include a survival-type outcome, but can be much more general event-histories such as recurrent events and multi-state processes. Typically, causal validity will not hold for these sets alone, and adjustment is required for additional covariates, baseline or  processes, denoted $\call$. Therefore, the set of measured $\calv_0$ from the previous sections is now extended to $\calv_0 \cup \call$. Here  $\call$ is not of substantive interest in the sense that we would like the effect of the treatment intervention on $\caln_0$ marginally over $\call$. }
This can be regarded as a continuous-time event-history analogue to the causal parameter of marginal structural models
\citep{Robins1,Zidovudine,royslandmsm}. The set $\call$ is typically needed for adjustment if it contains processes that are not eliminable, e.g. time-dependent confounding. In other words, the local independencies of Theorem \ref{theorem:omitU} and Proposition \ref{proposition:indCens} may not hold for  $\calv_0$ alone but would hold for $\calv_0 \cup \call$. As in discrete time, a continuous-time MSM can be fitted using suitable re-weighting. However, the re-weighting has two aspects: mimicking an intervention that prevents (or randomises) censoring (cf.\ Remark \ref{rem:rondomcens}), and an intervention on the treatment process. 

Let the stochastic system be described by the following sets of variables and processes
$$
\mathcal V = \mathcal V_0 \, \dot\cup  \,  \mathcal L \,  \dot\cup \, \mathcal U \,  \dot\cup \,  \{N^c\}
$$ 
where $\calv_0 = \calb_0 \cup \caln_0$, $N^x\in \caln_0$ is a treatment process, $N^c$ the censoring process, $ \mathcal N_{0}^{\backslash x}=  \mathcal N_0\backslash \{ N^x\}$ the outcome processes of interest, $\call = \calb_{\call} \cup \caln_{\call}$ are the measured baseline variables $\calb_{\call}$ and counting processes $\caln_{\call}$ we wish to marginalise out, and $\calu$ unobserved variables  or processes.
The interventions replace the $\calf^{\calv}$ intensities of this system by new intensities for $N^c$ and $N^x$. As we cannot observe $\calu$, we now give conditions such that we can instead work with 
the observable intensities,
where $ \lambda^c$ is the $\mathcal F^{\mathcal V_0 \cup \mathcal L \cup N^c}$-intensity of $N^c$ with respect to $P$, while $ \lambda^x$ denotes the $\mathcal F^{\mathcal V_0 \cup\mathcal L \cup N^c}$-intensity of $N^x$ with respect to $P$.
In contrast, the interventions 
enforce an $\mathcal F^{\mathcal V_0 \cup N^c}$-predictable $\tilde \lambda^c := \rho^c \cdot \lambda^c$, and 
an $\mathcal F^{\mathcal V_0}$-predictable $\tilde \lambda^x := \rho^x \cdot \lambda^x$ as in Proposition \ref{prop:LR}. As Theorem \ref{theo:1}, below, shows, we can obtain  the hypothetical scenario from the observables  under key structural assumptions using the following combined weights 
\begin{equation}
\label{cx.weights}
W_t =                     
{\color{black} \prod_{s \leq t} } (\rho_s^c)^{\Delta N_s^c}
                          \exp \left\{-\int_0^t ( \rho_s^c 
                           -1 )\lambda_s^c ds  \right\}
                          \prod_{s \leq t}
                          (\rho_s^x)^{\Delta N_s^x}
                          \exp \left\{-\int_0^t ( \rho_s^x 
                          - 1 ) {\lambda}_s^x ds  \right\}. 
\end{equation}
The first part of the weight refers to the censoring re-weighting, and is  given by $ \exp \{-\int_0^t  (\rho_s^c - 1) \lambda_s^c ds  \}$ before censoring.

\begin{theorem} \label{theo:1}
                With the above notation and set-up, consider a local independence model $\calp(G)$. 
We assume that censoring is independent with respect to $\calv$, i.e.\ ch$(N^c)=\emptyset$ in $G$.
 Furthermore,  assume causal validity with respect to an intervention that prevents censoring and  the   additional intervention on the treatment process.  Let $\tilde {\mathcal P}$ denote the resulting interventional model.
                       
Further, assume                    
                        \begin{enumerate}[(i)]
  \item \label{item: thm 1 delta separation} $N^c \nrightarrow_G (\caln_0, \caln_{\call}) | (\calv_0,\call)$, and  
  \item \label{item: thm 1 eliminable}  $\calu$ is eliminable with respect to $(N^x, \mathcal L \cup \mathcal V_{0}^{\backslash x})$;
         \end{enumerate}
finally, assume the technical conditions that         
$\rho^c$ and 
$\rho^x $ are bounded. Then we  have:\\ [1mm]
The interventional distribution $\tilde{P}$ under both hypothetical  interventions (preventing censoring and intervening on treatment) restricted to the subset $ \mathcal V_0$ is identified  from the  observable information $ {\mathcal F_{T \wedge C}^{\mathcal V_0 \cup \mathcal L \cup N^c}}$ by
\begin{align*}
\tilde P| _{  \mathcal F_{t}^{\mathcal V_0}} 
&= 
                          \zeta\bigg\{ \bigg( W_{t\wedge C} \cdot P|_{\mathcal F_{t\wedge C}^{\mathcal
                                        V_0 \cup \mathcal L \cup N^c}}   
                        \bigg)\bigg|_{\mathcal F_{t\wedge C}^{\mathcal V_0 \cup
                                        N^c}}
                        \bigg\}.
\end{align*}
 Thus, the marginal density $\tilde P|_{ \mathcal F^{\mathcal V_0 }_t }$ is given by re-weighting, marginalising over $\call$ and applying a de-censoring map as in \eqref{eq: zeta map txt} (see Remark \ref{remark:decens} in Supplement \ref{appendix: independent censoring and identifiability}). 
\end{theorem}              

\bigskip

The proof is given in Supplement \ref{appendix: proof of theorem 1}.
In words, the above theorem states conditions such that 
any marginal (over $\mathcal L$) causal parameters  are identified from the censored data ignoring $\mathcal U$, where these causal parameters quantify the effect on $\caln_0$  of  an intervention on $N^x$ while preventing censoring. In particular, with Theorem \ref{theo:1}(\ref{item: thm 1 delta separation}), we have independent censoring for $\tilde P$ restricted to $\calv_0 \cup \{ N^c \}$. 
{\color{black} 
And Theorem \ref{theo:1}(\ref{item: thm 1 eliminable}) can be seen as analogous to the sequential exchangability (or ignorability) assumption in discrete-time sequential treatment estimation. 
Due to the generality of Theorem \ref{theo:1}, we can use the above re-weighting strategy in any } standard survival analysis methods to estimate parameters under $\tilde P|_{ \mathcal F^{\mathcal V_0 } }$ (see also \cite{ryalen2019additive}).\\

Below, in Figure \ref{fig:theo1}, we give a graphical example to illustrate Theorem \ref{theo:1}.
Here, all nodes are processes and $\calv_0=\{N^y,N^x\}$, $\call=\{L\}$, $\calu=\{U^1,U^2,U^3\}$. The example represents a situation of time-dependent confounding by the process $L$: it is affected by, and affects itself the treatment process $N^x$ while also affecting the outcome process $N^y$.
We see that censoring is independent in $\calv$ as the node $N^c$ is childless. Property \eqref{item: thm 1 delta separation} can easily be seen via $\delta$-separation; note that while $N^c$ is locally dependent on $U^3$, the latter does not affect the remaining nodes, so that marginally over $U^3$ independent censoring is retained.
Property \eqref{item: thm 1 delta separation} would also  be invalid if $U^1$ or $U^2$ had directed edges pointing at $N^c$.
In $G$, the unobservable processes $(U^1,U^2,U^3)$ are eliminable in any sequence. We can verify that $(L,N^y)$ are locally independent of  $U^2$ given $(L,N^y,N^x,U^1,U^3)$, and $N^x$ is locally independent of  $U^1$ given $(N^x,L, N^y, U^3)$; and further $(L,N^y,N^x)$ are locally independent of $U^3$, so that  property \eqref{item: thm 1 eliminable} holds. Note that if we modified the example to $L$ being unobservable so that $\call=\emptyset$, $\calu=\{L,U^1,U^2,U^3\}$ then neither \eqref{item: thm 1 delta separation} nor \eqref{item: thm 1 eliminable} of Theorem \ref{theo:1} would hold.

\vspace{4mm}

{\em ((Figure \ref{fig:theo1} here))}

\vspace{4mm}

Theorem \ref{theo:1} can also be used in the following way {\color{black}  (similar to \citet{pearlrobins:95} for discrete time)}: Let us partition the nodes into
$$
\mathcal V = \mathcal V_0 \cup \mathcal Z \cup \{N^c\}.
$$ 
where $\cal Z$ is  any set of baseline variables or additional processes deemed relevant for causal validity. Then, if a subset $\call\subset {\cal Z}$ {\em exists} such that Theorem \ref{theo:1} holds with $\calu={\cal Z}\backslash \call$, then the interventional distribution $\tilde P$ is identified if $\call$ can be measured. We leave the development of algorithms that find such $\call$ for future work. \\
 
A slight generalisation of Theorem \ref{theo:1} can be obtained: We considered the intensities $\lambda^c$ and $\lambda^x$ of $N^c$ and $N^x$ that were measurable with respect to the observed information $\calf^{\calv_0 \cup \call \cup N^c}$ and $P$. We could instead accommodate different adjustment sets in the sense of allowing $\lambda^c$ and $\lambda^x$ to be intensities with respect to smaller filtrations, say, $\calf^{\calv_0^c \cup \call^c \cup N^c}$ and $\calf^{\calv_0^x \cup \call^x \cup N^c}$, respectively, with $\calv_0^c,\calv_0^x \subset \calv_0$ and $\call^c,\call^x \subset \call$: conditions \eqref{item: thm 1 delta separation} and \eqref{item: thm 1 eliminable}  still ensure the result. In particular, the likelihood ratio then still coincides with \eqref{cx.weights}.

\section{Application:  Introducing HPV-testing to follow-up low-grade cytology exams in cervical cancer screening program in Norway}
\label{sec:cervix}

Cervical cancer is an infrequent end-stage of common cellular changes, starting with minor  abnormalities and ranging through more definitely premalignant change to localised invasive and disseminated disease to death. This is an extremely complex process, but being able to detect cancer in its early stages or as precancers, accompanied by prompt appropriate treatment, are key elements of successful cancer screening programs.

Since 1995, Norwegian women 25 to 69 years of age are advised to attend cervical cancer screening every three years for cytology exam, with the objective to identify and treat those with cervical intraepithelial lesion grade 2 or 3 (CIN2+). Some of the cytology exams yield inconclusive results, and since 2005 HPV testing has been used to guide future treatment strategies. 

\subsection{Which HPV-tests are suitable for secondary screening?} 
The three most common HPV tests in Norway from 2005 to 2010 were AMPLICOR HPV Test, Hybrid Capture2 High-Risk HPV DNA Test or PreTectTM HPV-Proofer referred to as Amplicor, HC2, and PreTectProofer \citep{NygaardRoysland}. When used after an inconclusive finding, PreTectProofer negative HPV-tests were more often followed later by a detection of CIN2+ than its competitors suggesting more false-negative tests for PreTecProofer \citep{Haldorsen,NygaardRoysland}. However, PreTectProofer patients were also subject to more subsequent testing \citep{NygaardRoysland}, presumably due to the manufacturer's recommendations. Thus, the apparent false-negative PreTectProofer results might have been due to the higher rate of subsequent testing.

The  objective of our analysis is to compare the 
cumulative incidences of CIN2+ detection in the   PreTectProofer group with the  other two groups under a hypothetical scenario where an intervention ensures that the PreTectProofer patients are subject to the same rate of subsequent testing as under the   other test-types.  More formally,
let $\tilde P$ denote the distribution under the modified ``subsequent testing'' intensity and prevention of censoring. The contrast of interest is then given as the difference in cumulative incidence functions over time $t$,
\begin{equation}
\tilde P(N_t^y = 1\mid \mbox{PTP, neg.HPV, inconcl.Cyt}) - 
\tilde P(N_t^y = 1\mid \mbox{A/HC, neg.HPV, inconcl.Cyt}).  
\label{eq:estimand}
\end{equation}
Under the assumed causal structure described next, this is only not zero if the different HPV-test types have different false-negative rates.

\subsection{Local independencies and causal validity}
Figure \ref{fig:HPVmodel} contains a local independence model for the assumed HPV-testing scenario. 
Here we have that ``latent disease'' represents a baseline variable  describing the disease at the time of the HPV-test and cytology. ``HPV-test type'' is the HPV-test that accompanies the cytology (binary: PreTectProofer, yes or no). The node ``latent progression'' represents a set of counting processes describing the disease's progression from the time of the cytology over time. ``Subsequent test'' is a counting process that `counts'  the first subsequent cytology or HPV-test. The node ``CIN2+ histology'' represents the counting process that jumps when CIN2+ is detected. The analysis is restricted to subjects who had an inconclusive (i.e.\ ASCUS/LSIL/unsatisfactory) secondary cytology screening and who initially had a negative HPV-test result, as indicated by the boxed nodes in Figure \ref{fig:HPVmodel}. 
Individuals are censored at the end of the follow-up period if there was no occurrence of CIN2+ detection. The number of deaths was negligible, and is ignored. 

\vspace{4mm}

{\em ((Figure \ref{fig:HPVmodel} here))}

\vspace{4mm}

Key assumptions are that any testing in itself does not affect the disease progression, but also vice versa, disease progression does not affect the testing regime. This could be violated if certain (undocumented) symptoms lead to the initiation of an HPV-test, but this is unlikely in the present case. 
The edge ``HPV-test type'' $\rightarrow$ ``Subsequent test'' is due to the observationally differing subsequent testing rates. In this analysis, we mimic an intervention imposing the same subsequent testing regime for all the HPV tests, in effect making "Subsequent test" locally independent of "HPV-test type" in the hypothetical scenario.  Under $\tilde P$, any association between type of HPV-test and CIN2+ histology when conditioning on $\{$``HPV-result''=negative, ``Cytology''=inconclusive$\}$ is then due to unblocked paths ``HPV-test type'' $\rightarrow$ ``HPV-result'' $\leftarrow$``Latent disease'' $\rightarrow$ $\cdots$``CIN2+ histology'' which would indicate a tendency to false-negative results due to the edge ``HPV-test type'' $\rightarrow$ ``HPV-result''.

We will appeal to Theorem \ref{theo:1}. For this we define the sets of nodes
\begin{align*}
    \calb_0 &= \{\text{Cytology}, \text{HPV-test type},  
    \text{HPV result}\} &  \\
     \calu &= \{\text{Latent disease}, \text{Latent progression} 
     \}. &
\end{align*}
Let $N^x, N^y, N^c$ be counting processes where $N^x$ counts initiation of ``Subsequent test'', $N^y$ counts  histology finding CIN2+, and $N^c$ counts censoring events. Thus, $\caln_0 = \{ N^x , N^y \}$, and $\caln_0^{\setminus x} = N^y$ and $\calv_0 = \calb_0 \cup \caln_0$. We make the following observations:
\begin{itemize}
    \item $N^c$ has no descendants, i.e. censoring is independent with respect to $\calv$.
    \item There are no allowed trails from $N^c$ to $\calv_0$, so $N^c \nrightarrow_G \caln_0| \calv_0$, and condition \eqref{item: thm 1 delta separation} from Theorem \ref{theo:1} holds.
    \item Every allowed trail from $\mathcal U$ to $N^x$ is blocked by either ``Cytology'' or ``HPV-test type'', both of which are in $\calb_0$. Hence, $\calu$ is eliminable with respect to $(N^x , \calv_0\backslash\{N^x\})$ in $G$, and condition \eqref{item: thm 1 eliminable} of Theorem \ref{theo:1} holds.
\end{itemize}
These points justify the use of Theorem \ref{theo:1}, and the $\calf^{\calv_0 }$-intensity of $N^y$ under $\tilde P$ is identified. 
As $\call = \emptyset$, we have from Proposition \ref{proposition:indCens} that the model restricted to $\calf^{\calv_0 \cup N^c}$ is causally valid with respect to prevention of censoring and subject to independent censoring. The censoring weights thus equal one, and \eqref{cx.weights} reduces to $W_t = \prod_{s\leq t} (\rho_s^x)^{\Delta N^x_s} \exp \big\{ - \int_0^t (\rho_s^x-1)\lambda_s^x ds \big\}$ in this example.

\subsection{Analysis} \label{section: analysis}
We consider data from the Cancer Registry of Norway on  1736 subjects (878 in the PreTectProofer group and 858 in the Amplicor/HC2 group) with inconclusive cytology and negative initial HPV test recorded in 2005-2010 until CIN2+ or end of 2010 (for details see Supplement Section 9). 
We calculate the probability of having CIN2+ detected by time $t$ in a situation where individuals receive ``subsequent test'' with intensity $\tilde \lambda^{x}$ equal to the intensity in the (pooled) Amplicor/HC2 group. The probability of interest is calculated by one minus the weighted Kaplan-Meier estimator $\hat S^w$, given by
\begin{align}
    \hat S^w_t = \prod_{T_i \leq t}\bigg(1 - \frac{\widehat W_{T_i-}^i Y_{T_i}^i}{\sum_{j} \widehat W_{T_i-}^j Y_{T_i}^j}\bigg),  \label{eq: weighted Kaplan-Meier}
\end{align}
where $\widehat W^i$ are estimates of \eqref{cx.weights}, the $T_i$'s the observed detection times of CIN2+, and the $Y_t^i$'s are at-risk indicators (not censored) at time $t$ in a given group.

For the Amplicor/HC2 group, the ``subsequent test'' intensity is equal to the observational intensity, and each $W^i$ is equal to one. Thus, \eqref{eq: weighted Kaplan-Meier} reduces to the standard Kaplan-Meier estimator with CIN2+ occurrence as the endpoint.

To estimate this probability in the PreTectProofer-group, we first need estimates $\widehat W^i$. The $\calf^{\calv_0}$-intensity of $N^x$ is only a function of ``HPV-test type'' (due to local independences implied by the graph).  We thus obtain the estimator
$$ \widehat W^i_t = 1 + \int_0^t \widehat W_{s-}^i (\theta_{s-} - 1) dN_s^{x,i} - \int_0^t \widehat W_{s-}^i I(N_{s-}^{x,i}=0) d( \hat{ \tilde A}_s^x  - \hat{ A}_s^x  ), $$
where $\theta_s = \frac{ \hat{ \tilde A}_s^x  - \hat{ \tilde A}_{s-b}^x  }{ \hat{ A}_s^x  - \hat{ A}_{s-b}^x}$ for a smoothing parameter $b$, and $\hat{ \tilde A}_s^x $ and $\hat{ A}_s^x $ are the Nelson-Aalen estimators for subsequent test initiation applied to the Amplicor/HC2 group and the PreTectProofer group, respectively. We calculate these weights using the \texttt{R} package \texttt{ahw}. 
The weighted Kaplan-Meier estimator \eqref{eq: weighted Kaplan-Meier} is consistent for a bandwidth parameter $b=b(n)$ depending on the sample size $n$ with $b(n) \underset{n \rightarrow \infty}{\rightarrow} 0$ and $\sup_{s \leq T}\frac{\tilde \alpha_s^x}{\alpha_s^x} < \infty$; see \citep[respectively Theorems 1 \& 2, and Theorem 1]{ryalen2019additive,ryalen2018transforming}.

\subsection{Results}
The cumulative incidences of CIN2+ detection are shown in Figure \ref{fig:triageincidence}. In the upper panel, we see three curves: for PreTectProofer without and with re-weighting, and the non-PreTectProofer group without weighting.
Thus, the proportion of CIN2+ detected in the PreTectProofer group is somewhat lower under the hypothetical subsequent-testing regime than observationally, i.e. without an intervention. However, the estimated contrast  \eqref{eq:estimand}, shown in the lower panel, is still significant. Under the structural assumptions we made, this gives support to the interpretation that there are genuinely more false-negative HPV-test results in the PreTectProofer group than with the other test types, and that the greater CIN2+ numbers are not only a consequence of more frequent subsequent testing.

\subsection{Code}
\label{subsection: code}
The code to simulate  data comparable to the real data (which cannot be disclosed) and to replicate all numerical results can be accessed at the GitHub repository\\
\texttt{https://github.com/palryalen/paper-code/tree/master}.

\vspace{4mm}

{\em ((Figure \ref{fig:triageincidence} here))}

\section{Conclusions and Discussion}

We proposed a formal graphical approach to causal reasoning in survival and general event-history settings in continuous time
{\color{black} with graphical rules for the identifiability of (marginal) causal parameters.} 
We formalised these in terms of interventions that modify the intensity of a treatment process which is similar to the notion of `randomised plans' \citep{gill2001causal} or stochastic interventions \citep{dawid2010identifying,Diaz2018}; and it is more explicit than that of `causal influence'   \citep{commenges09}. {\color{black} Further, we conjecture that our change of treatment intensity could  be thought of as a stochastic change of time: one can construct a stochastic time-change $\varrho$ such that the $P$-intensity of $N^x_\varrho$ coincides with the $\tilde P$-intensity of $N^x$ \cite[II 5.2.2]{Andersen}}. While the predominant causal approach uses potential outcomes, we have chosen to simply compare the observational and the interventional distributions, $P$ and $\tilde P$; this is similar in spirit to other causal frameworks, for instance by \cite{sgs:00,dawid2010identifying,peters:16}. 

Our criteria for identifiability are sufficient, and we believe that necessary conditions analogous to \cite{ShpitserPearl:08} will be difficult to derive for the general continuous-time case. In future work it will be interesting to generalise our results to the extended local independence graphs of \cite{mogensen2020} which are closed under marginalisation and which can be obtained by projecting over unobservable processes. { 
The requirement that processes may not jump simultaneously is plausible as long as the components truly represent separate phenomena, where accidental violations due to rounding are not essential. Alternatively, one can introduce additional processes which count simultaneous events, which will work in practice as long as the treatment  and outcome processes do not jump systematically at the same time;  re-defining some of the other processes will not change the interpretation of the causal parameters of interest. If treatment or outcome processes are affected,  future work would need to explicitly integrate systematic violations, which will require a more complex likelihood ratio process. Extending  graph separation and  eliminability to this situation will also require notable technical efforts outside the scope of our paper. 
Extensions to more general processes would also be desirable; some such extensions of the graphical representation exist for larger classes of stochastic differential equations \citep{mogensen18,mogensen2022} and stochastic kinetic models \citep{bowsher2010}. Other future generalisations might allow  for processes that can jump with continuous magnitude. Carrying out such extensions requires consideration of more general counting measures than we study here. 
}

We further 
addressed independent censoring arguing that inference for the uncensored case additionally requires causal validity with regard to an intervention that prevents censoring. While this has been recognised 
for longitudinal settings with drop-out \citep{hernanrobinsbook}, it seems less appreciated in more traditional approaches to survival analysis; an exception is recent work by \cite{rytgaard2021}.
{ Our results on identifiability under censoring are more widely applicable even outside a causal inference context, for instance to determine a sufficient set of covariates for inverse probability of censoring weighting.}

Finally,  Theorem \ref{theo:1}  enables identification of causal parameters via re-weighting, 
thus generalising well-known results from discrete-time marginal structural models \citep{Robins1,Zidovudine}. 
Marginal structural models are often linked to the problem of time-dependent confounding but can also be used in other scenarios \citep{joffe2004model}. In the application in Section \ref{sec:cervix}, Theorem \ref{theo:1} was used to establish that time-dependent  confounding was not an issue  (as $\call$ was the empty set) and that a fairly simple weighting process sufficed.  Similarly, \cite{Ryalen18} used a continuous-time MSM to compare the treatment regimens radiotherapy and radical prostatectomy.

Our results apply to general multivariate counting processes, which include, e.g., multi-state processes. 
In particular, they do not rely on any particular (semi)-parametric class of models. While most practical inference needs additional modelling assumptions, the data example of section \ref{sec:cervix} allowed for non-parametric estimation. In addressing identifiability, we have chosen the re-weighting route which appears natural in view of the simplicity of Proposition \ref{prop:LR} and corresponds to a change of measure technique. In discrete-time settings,  g-computation is an alternative, {\color{black} or doubly-robust and machine-learning extensions thereof \citep{kallus22,luckett20,Nie20,zhang13}.} However,  g-computation  seems hard in entirely general continuous-time settings, as discussed by \cite{gill01} (see also \cite{gill2001causal}), but fully parametric versions exist \citep{gran2015}.
We believe that our graphical causal reasoning can also be combined with g-estimation  \citep{lok04,Lok}, or targeted minimum-loss estimation \citep{rytgaard2021} in continuous-time settings. It complements these methods because the graphical representation and explicit discussion of eliminability strengthens the plausibility of  {\color{black} assumptions, such as sequential (conditional) exchangeability}.

\bibliographystyle{rss}
\bibliography{mybib}


\newpage

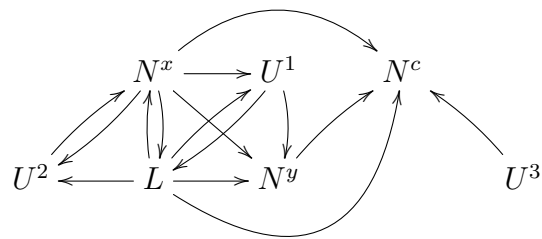
\begin{figure}[!hb]
\caption{Illustration of Theorem \ref{theo:1}}
\label{fig:theo1}

\vspace{1cm}
\begin{equation*} 
        \xymatrix{
        & N^x \ar@/^.3pc/[dl]\ar[r] \ar[dr]  \ar@/^.3pc/[d]
        \ar@/^2pc/[rr] & U^1 \ar@/^.3pc/[d] \ar@/^.3pc/[dl] & N^c & \\
        U^2 \ar@/^.3pc/[ur] & L\ar[l]\ar[r]  \ar@/^.2pc/[u]
        \ar@/_3.5pc/@{->}[rru]\ar@/^.3pc/[ur]&
        N^y \ar@/^.3pc/[ur]& &
        U^3\ar@/_.3pc/@{->}[ul] 
        }
\end{equation*}
\end{figure}

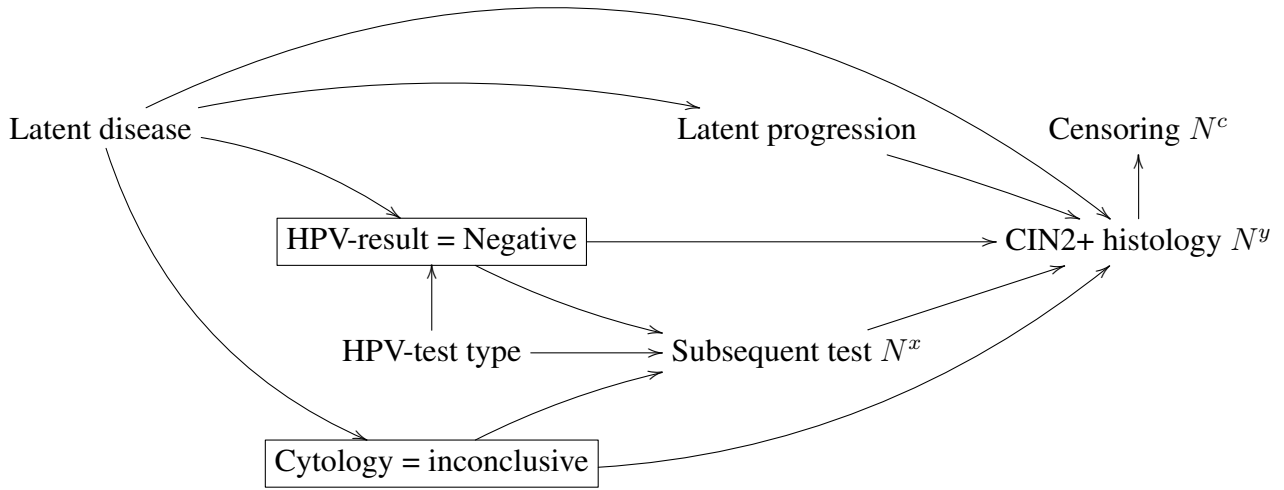
\begin{figure} 
\caption{
Local independence graph showing the assumed structure of the HPV-testing scenario. }
\label{fig:HPVmodel}
\vspace{1cm}
\begin{equation*}
\xymatrix{
\text{Latent disease} \ar@/^1pc/[dr] \ar@/_2pc/[dddr] \ar@/^5.5pc/[drrr]
\ar@/^1.5pc/[rr]   &  & \text{Latent progression} \ar@/^0.2pc/[dr]
  & \text{Censoring $N^c$} 
 & \\
& *+[F]{\text{HPV-result = Negative}} \ar[rr] \ar@/_.7pc/[dr]&  &
\text{CIN2+ histology $N^y$} 
\ar[u] \\
& \text{HPV-test type} \ar[r] \ar[u]& \text{Subsequent test $N^x$} \ar
[ur]&
 \\
&    *+[F]{\text{Cytology = inconclusive}}   \ar@/_2.5pc/[urur] \ar@/^.7pc/[ur]& & \\
}
\end{equation*}
\end{figure}

\begin{figure}[!htp]
\centering
\caption{Upper: Proportion CIN2+ detected after the secondary screening with ASCUS/LSIL/unsatisfactory cytology and negative HPV-test. Lower: Difference between the proportions of detected CIN2+ in the PreTectProofer-group and Amplicor/HC2-group when imposing the Amplicor/HC2 group's subsequent testing regime on both groups. We obtained 95\% pointwise confidence intervals using a bootstrap sample of 400.}
\includegraphics[scale=.6]{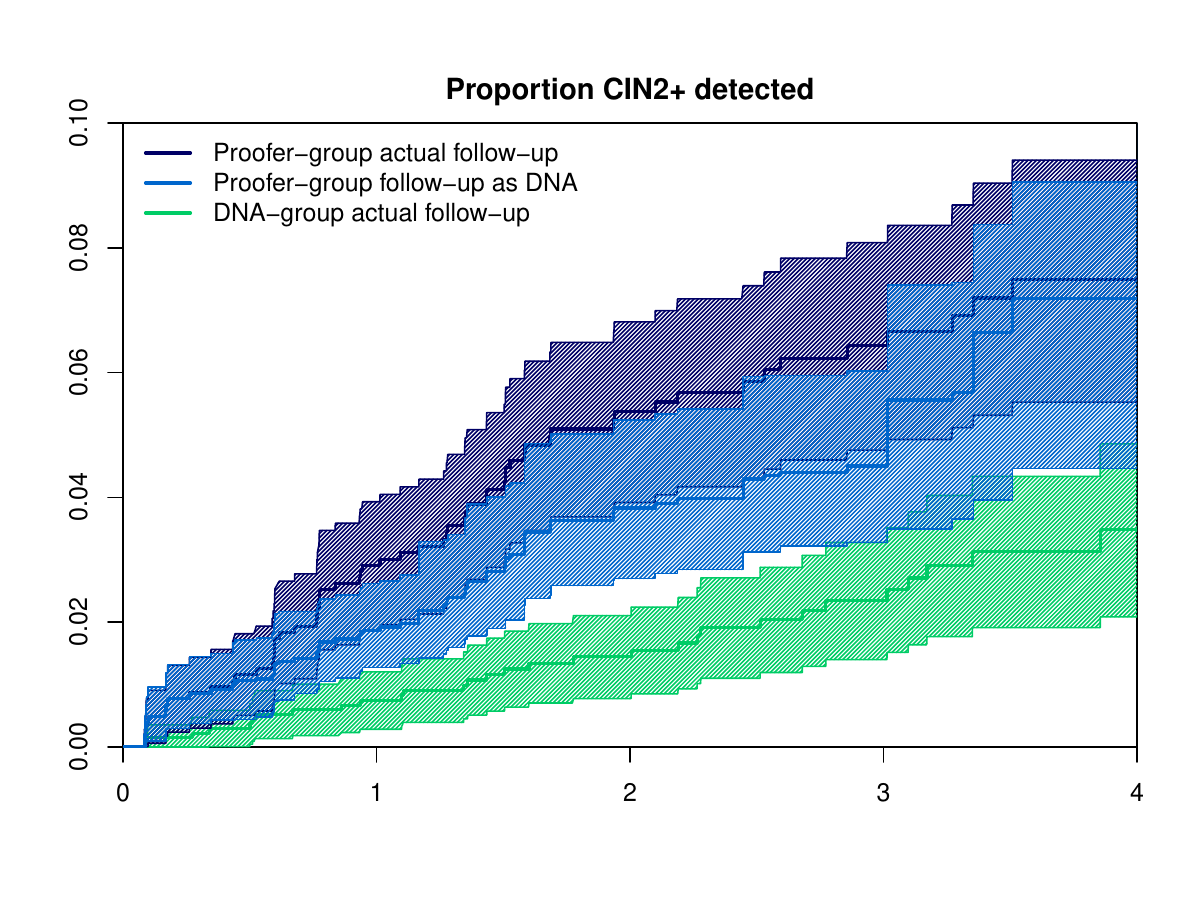}
\includegraphics[scale=.6]{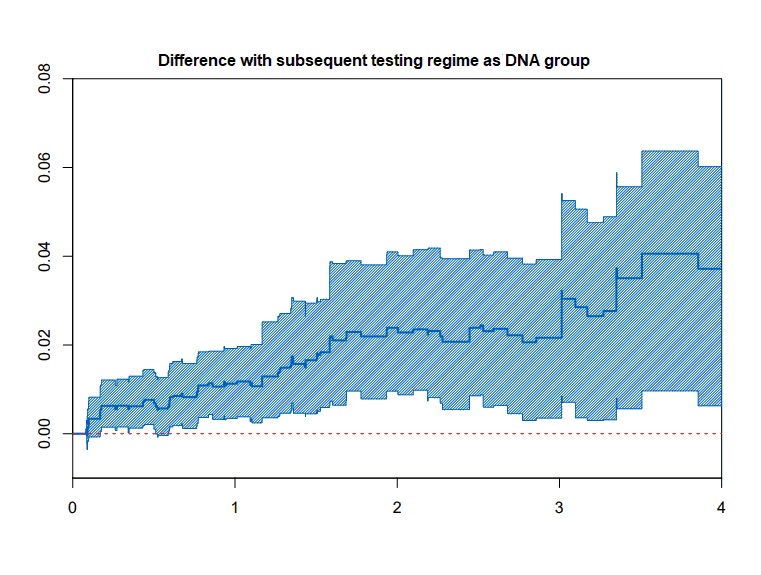}
\label{fig:triageincidence}
\end{figure}

\appendix

\section{Examples: Local Independence Graphs}
\label{appendix:li_examples}

In the graph \eqref{eq: figure AA}, below, we consider two baseline variables $X,Z$ and two counting processes $N^a, N^b$. Due to the edge $X\leftarrow Z$ there is no independence between the baseline variables. Both variables are locally independent of  $N^a$ and / or $N^b$ jointly or individually by definition. The absence of an edge from $N^a$ to $N^b$ implies the local independence  $N^a\nrightarrow N^b\mid (X,Z,N^b)$. However, as can be read of with $\delta$-separation, it does not generally hold in this local independence model that $N^a \nrightarrow N^b | N^b $ nor that $N^a \nrightarrow N^b \mid (N^b,Z)$ but it does hold that $Z$ can be ignored to obtain $N^a \nrightarrow N^b \mid (X,N^b)$. 
Thus if $N^a$ were a censoring process, then the model as a whole and the submodel $(N^a,N^b,X)$ would satisfy independent censoring, but ignoring $X$ would violate  independent censoring, cf. Definition \ref{def:IC}.\\

\begin{align}
\xymatrix@=3em{
Z 
\ar[d] \ar[r]  & N^a \\
X \ar[r] \ar[ur] & N^b \ar[u]
} \label{eq: figure AA}
\end{align}

The following are examples of local independence graphs where the processes in $\calu$ are eliminable.
In \eqref{eq: figure BB} we see that the ordering $U^{1,2},U^3$ satisfies Definition \ref{def:eliminU}, first property \eqref{prop1(i)} and then property \eqref{prop1(ii)}; by symmetry $U^{1,3},U^2$ also works (but $U^{2},U^3, U^1$, for instance, would not be a suitable ordering). The projection graph (see Remark \ref{rem:mogensen}) would just consist of $N^*$ and $N^y$  with two edges linking them.

\begin{align}
\xymatrix@=3em{
U^2 
\ar@/^/[d] \ar[r]  & U^1 & U^3 \ar[l] \ar@/^/[d] \\
N^* \ar@/^/[rr] \ar@/^/[u] \ar[ur] & & N^y \ar@/^/[ll] \ar[ul] \ar@/^/[u]
} \label{eq: figure BB}
\end{align}

\ \\ The graph in \eqref{eq: figure CC} is an example with a bivariate outcome process $N^y=(N^1,N^2)$; here the unobserved process is eliminable as \eqref{prop1(ii)} applies. The projection graph would contain a bidirected edge between $N^1$ and $N^2$ but this is not relevant for our purposes, here.

\begin{align}
\xymatrix@=3em{
N^2 
\ar[d] \ar@/^/[r] \ar@<0.5ex>[dr]  & U \ar@/^/[d] \ar[l] \\
N^* \ar@/^/[u] \ar[ur] \ar[r] & N^1 \ar@/^/[l] \ar[u] \ar@<0.5ex>[ul]
} \label{eq: figure CC}
\end{align}

\section{Identifiability}
\label{appendix: identifiability}

We need to formalise  what we mean by identifiability in our context before we present the proof of Theorem \ref{theo:1}. Our definition of identifiability given below is in the spirit of previous definitions (\cite{manski2003partial,PearlShpitser}), but we re-state it within our context and notation as it is central to our results. 
We start by clarifying the very general notion of `parameter' that we employ.

Consider a collection of probability measures $\mathcal P$ on a
$\sigma$-algebra $\mathcal F$.  We will consider computable quantities that can
be represented by maps 
\begin{equation} 
\label{eq:xiDef}\xi: \mathcal P \rightarrow \Xi.
\end{equation} 
Even if this model is not parameterised, since
we do not rely on maps $\Theta : \mathbb R^n \rightarrow \mathcal P$, we will
somewhat ambiguously refer to maps like \eqref{eq:xiDef} as \emph{parameters}.
Note that we also consider parameters that are random. For instance, 
likelihood-ratios and predictable intensity processes can thus be seen as parameters. 

Suppose we are not able to acquire all the potential information $\mathcal F$
about the whole stochastic system, but only some abbreviated adaptation
$\mathcal G$  represented by a sub $\sigma$-algebra.  We would now like to know if it is
possible to learn the value of the parameter $\xi$, based only on the incomplete information $\mathcal G$.

The probability distributions for the observable information, according to the
model $\mathcal P$, is given by restricting every $P \in \mathcal P$ to $\mathcal G$.
Whenever $P$ is a distribution in $\mathcal P$, let $ P|_{\mathcal G}$ denote
its restriction to $\mathcal G$ and let $\mathcal P|_{\mathcal G}$ denote the
set of all such restrictions.  
The usual notion of identifiability now translates into  the following.

\begin{definition} \label{identDef}
A parameter $\xi : \mathcal P \rightarrow \Xi$ is identifiable with respect to
the 
incomplete information $\mathcal G$ if $\xi(P)$ can be uniquely determined from
$P|_{\mathcal G}$, i.e. there exists another map $\eta : \mathcal P|_{\mathcal G} \rightarrow \Xi$ such that 
\begin{equation}
	\xi(P) = \eta( P|_{\mathcal G} ), \label{eq: de-censoring map}
\end{equation}
for every $P$ in $\mathcal P$. 
\end{definition}

With regard to a causal parameter, we think of the intervention changing the local
characteristics as a transformation $P \mapsto \tilde P$, and 
        $\tilde {\mathcal P} := \{\tilde P| P \in \mathcal P\}$
defines a non-parametrised model for the hypothetically intervened scenario. 
Recall also that we consider transformations where $\tilde P$ can be
achieved through re-weighting $P$, i.e. $\tilde P = \frac{d\tilde P}{dP} \cdot
P$. If the underlying model is causal, then we will refer to any parameter 
$\xi: \tilde {\mathcal P} \rightarrow \Xi$ as a \emph{causal parameter}.  
Such a parameter always induces another parameter of the original model, namely  $\tilde \xi:
\mathcal P \rightarrow \Xi $, where $\tilde \xi (P) = \xi(\tilde P)$. The
induced parameter $\tilde \xi$ is obtained by first performing the transformation
$P \mapsto \tilde P$ and then applying $\xi$ to the transformed distribution $\tilde P$. 
Pearl's notion of identifiability
\cite[Definition 3.2.4]{Pearl} now translates into the following:
\begin{definition} \label{CounterfcatIdent}
A causal  parameter $\xi:\tilde {\mathcal P} \rightarrow \Xi$
is identifiable with respect to the incomplete information $\mathcal G$ if the induced
parameter $\tilde \xi: \mathcal P \rightarrow \Xi $ is identifiable with
respect to $\mathcal G$ in the sense of Definition \ref{identDef}. 
\end{definition}

\section{Proof of proposition \ref{prop:LR}}
\label{appendix: proof of proposition 1}

	\begin{proof}

  To see that (\ref{claim1}) $\implies$ (\ref{claim2}),
  we note that the local characteristics for the baseline variables under $P$ coincide with the respective local characteristics under $\tilde P$. This means that $\tilde P|_{\mathcal F_0} = 
	P|_{\mathcal F_0}$. 
  Moreover, 
  note that the causal validity means that 
  $M^i_t  := N^i_t - \int_0^t \lambda_s^i ds$ is a local  $\mathcal F$-martingale with respect to $\tilde P$ for every $i \neq *$. On the other hand,  
  we have already assumed that $\tilde M_t^* := N_t^*  - \int_0^t \rho_s \lambda_s^* ds$ is  
  a local  $\mathcal F$-martingale with respect to $\tilde P$. Since we assume that    $\tilde P \ll P$, \cite[Theorem III.5.43]{JacodShiryaev} tells us that the likelihood-ratio process coincides with $W_t$. This means that (\ref{claim2}) also holds.

  Conversely, assume that (\ref{claim2}) holds. Note that this immediately  implies   $\tilde P  \ll P$ on $\calf_T$.   Moreover,  let $K^*_t := \int_0^t   ( \rho_s - 1 ) dM_s^*.$ We know \cite[Theorem II.37]{Protter} that $W$ is the (unique) solution of the  integral equation $W_t = 1 + \int_0^t W_{s-}dK_s^*.$
	   Now, whenever $h$ is a bounded and $\mathcal F$-predictable process,  we let $H =  \int_0^T h_s dM_s^i$, and note that (\ref{claim2}) implies that 
	  \begin{align*}
	&  E_{\tilde P}\bigg( \int_0^T h_s dM_s^i\bigg) = E_{ P}\bigg(W_T \int_0^T h_s dM_s^i\bigg) \\  &=   E_{ P}\bigg(   \int_0^Th_s  W_{s-}  dM_s^i + 
	 \int_0^t \int_0^{s-} h_r dM_r^i W_{s-} dK_s^ *  + \int_0^T   h_s W_{s-} d[K^*, M ^i]_s      
	 \bigg) \\ &=  0, 
	  \end{align*}   
	since $M^i$ and $M^*$ (and hence $K^*$) are local martingales with respect to $P$, and $K^*$ and $M^i $ are orthogonal.

	On the other hand, we have that 
  \begin{align*}
&  E_{\tilde P}\bigg( \int_0^T h_s d\tilde M_s^*\bigg) = E_{ P}\bigg(W_T \int_0^T h_s d\tilde M_s^*\bigg) \\  =  & E_{ P}\bigg(  \int_0^T h_s W_{s-} d\tilde M_s^* + 
 \int_0^t \int_0^{s-} h_r d\tilde M_r^* W_{s-} dK_s^*  + \int_0^T   h_s W_{s-} d[K^*, \tilde M ^*]_s      
 \bigg) \\ = &  
 E_{ P}\bigg\{  \int_0^T h_s W_{s-}  d\tilde M_s^* 
    + \int_0^T   h_s W_{s-} ( \rho_s - 1)N_s^*      
  \bigg\} \\ = &
   E_{ P}\bigg(   \int_0^T h_s W_{s-}  \rho_s d M_s^*\bigg)   = 0,  
  \end{align*}
	   since $M^*$ is a local martingale with respect to  $P$. This means that $\tilde M^*$ and  every $M^i$ for $i \neq *$ form local martingales with respect to $\tilde  P$.  
	   This is equivalent to causal validity, which means that (\ref{claim1})  holds. 
	\end{proof}

\section{Proof of Theorem \ref{theorem:omitU}}
 \label{appendix: proof of theorem omitU}

\begin{proof}[of theorem \ref{theorem:omitU}]
Suppose we have eliminability with a sequence $U^1,  \dots,  U^K$. 
By assumption,  $\delta$-separation implies  that either 
\begin{equation} \label{eq:u}
    \calu_1 
 \nrightarrow (\caln_0^{\backslash * } \cup \bar \calu^{2})
  \,|\, (\calv_0 \cup \bar \calu^{2})
\end{equation}
or 
\begin{equation} \label{eq:uu} 
    \calu_1  
  \nrightarrow N^*
  \,|\, ( \calv_0 \cup \bar \calu^{2}). 
\end{equation}


First, assume that \eqref{eq:u} holds.  
We need to show that whenever $N$ is a counting process in $\calv_0^{\setminus *}\cup {\bar \calu}^2$, then its $\mathcal F_t^{\calv_0,  {\bar \calu}^2}$ intensities with respect to $P$ and $\tilde P$ coincide. 
(Note that we prove the claim only for processes; the case for baseline variables is analogous.)
Let  $N \in \calv_0^{\setminus *}\cup {\bar \calu}^2$,  and let $\lambda$ be  an $\mathcal F_t^{\calv_0,  {\bar \calu}^2}$-predictable intensity for $N$ with respect to $P$. The local independence property implies that  $\lambda$ also defines an intensity for $N$ with respect to 
$\mathcal F_t^{\calv_0,   \calu^1 ,  {\bar \calu}^2}$. However, since the model is causally valid over $\calv$, 
$\lambda$ defines an intensity w.r.t. $\tilde P$ and $\mathcal F_t^{\calv_0,     \calu^1 ,  {\bar \calu}^2}$ as well. Finally, by assumption,  $\lambda$ is  $\mathcal F_t^{\calv_0 ,  {\bar \calu}^2}$-predictable, so it defines an intensity with respect to $\tilde P$ and $\mathcal F_t^{\calv_0 ,  {\bar \calu}^2}$, and the claim holds.

Assume instead that \eqref{eq:uu} holds. Since the model is causally valid with respect to  $\mathcal F_t^{\calv_0,    \calu^1 ,  {\bar \calu}^2}$, we have by  Proposition \ref{prop:LR} that 
the ratio 
$d\tilde P|_{\mathcal F_t^{\calv_0 ,     \calu^1 ,  {\bar \calu}^2} }/dP|_{\mathcal F_t^{\calv_0 ,     \calu^1 ,  {\bar \calu}^2}}$ 
is of the form 
$$
W_t := \prod_{s \leq t} \rho_s^{\Delta N_s^*} \exp{ \Big\{  - \int_0^t ( \rho_s - 1) \lambda_s^* ds  \Big\}}
$$
 where $\lambda^*$ and $\rho \lambda^*$ coincide with  
 $\mathcal F_t^{\calv_0,     \calu^1 ,  {\bar \calu}^2}$ intensities for $N^*$ with respect $P$ and $\tilde P$. The local independence property now implies that 
 $\lambda^*$ and $\rho \lambda^*$ can be also  chosen to coincide with  $\mathcal F_t ^{\calv_0 , {\bar \calu}^2}$-predictable processes. This means that there exists an $\mathcal F_t ^{\calv_0 , {\bar \calu}^2}$-measurable version of $W_t$. One more application of Proposition \ref{prop:LR} then shows that the model is causally valid also with respect to the reduced filtration  
 $\mathcal F_t^{\calv_0  ,  {\bar \calu}^2}$.  The claim of Theorem \ref{theorem:omitU} now follows by simple induction.

\end{proof}

\section{Independent censoring and identifiability}
\label{appendix: independent censoring and identifiability}

We begin by  showing for a general multivariate counting process that under independent censoring with respect to a subset $\calv_0 \subset \calv \setminus \{ N^c \}$ of the whole system the $\calf^{\calv_0}$-intensities are identified based on the limited information  (i) ignoring processes not in $\calv_0$ and (ii) `stopping' at censoring.\\

Right censoring implies that for all $N\in \caln_0$ we can only ever observe the stopped processes $N_{t\wedge C}= \int_0^t I( s \leq C) dN_s$.
Note that the $\mathcal F^{\calv_0}$-intensity of a counting process $N\in \mathcal V_0$ with respect to $P$ defines a target parameter in the sense of \eqref{eq:xiDef}. Under censoring, $\mathcal F^{\calv_0}$ cannot be observed, as only the information stopped at censoring can be seen. We are thus interested in whether these $\mathcal F^{\calv_0}$-intensities are identified based on the observable information.   
This reduced, observable information is formally denoted by the $\sigma$-algebra  
$ \calf^{\calv_0 \cup N^c}_{T\wedge C}$ 
containing all potentially observable events for $\mathcal V_0$ and stopped at censoring.  Below, Lemma  \ref{lem:newDecens} says that independent  censoring implies that the $\mathcal F^{\calv_0}$-intensity of every $N \in \mathcal V_0$ is identifiable from the observable information, even if $\mathcal F^{\calv_0}$ contains unobservable events due to censoring.

 \begin{lemma}\label{lem:newDecens}
 Let $\mathcal P$ be a local independence model where we have independent
censoring in the model induced by $\mathcal V_0 \cup N^c$. Whenever $N \in \mathcal V_0$, there exists a non-negative and $\mathcal F_t^{\mathcal V_0}$-predictable process $\mu$ such that 
\begin{align}
    I(\cdot  \leq C) \mu  \label{eq: intensity factorisation}
\end{align}
 is an $\mathcal F^{\mathcal V_0 \cup N^c}_{t \wedge C}$-intensity for the stopped process $N_{\cdot \wedge C}$.

If moreover $P( t\leq  C | \mathcal F_t^{\mathcal V_0})   > 0$ $P$-a.s.\ whenever $t < T$, then $ \mu$ is an $\mathcal F_t^{\mathcal V_0}$-intensity w.r.t. $P$ for $N$.
 \end{lemma}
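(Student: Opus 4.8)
The plan is to proceed in two stages, mirroring the two claims. For the first claim, I would start from the definition of independent censoring: $N$ is locally independent of $N^c$ given $\calv_0 \cup \{N^c\}$, which by the martingale characterisation of local independence means that if $\mu$ denotes the $\calf^{\calv_0}$-intensity of $N$ under $P$ (which exists by hypothesis since $N \in \calv_0$), then $\mu$ is \emph{also} the $\calf^{\calv_0 \cup N^c}$-intensity of $N$; that is, $N_t - \int_0^t \mu_s\,ds$ is a local $\calf^{\calv_0 \cup N^c}$-martingale. Here I need to be a little careful about the direction: local independence of $N$ from $N^c$ given $\calv_0\setminus\{N^c\}$ tells me the $\calf^{\calv_0\cup N^c}$- and $\calf^{\calv_0}$-intensities of $N$ agree, which is exactly what is needed; the process $\mu$ is then $\calf^{\calv_0}$-predictable and non-negative. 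The next step is the standard ``stopping'' fact for counting process intensities: if $\lambda$ is the $\calf$-intensity of $N$ and $C$ is an $\calf$-stopping time, then $I(\cdot \le C)\lambda$ is the $\calf$-intensity of the stopped process $N_{\cdot\wedge C}$, since $N_{t\wedge C} - \int_0^{t}I(s\le C)\lambda_s\,ds = \int_0^t I(s\le C)\,dM_s$ is again a local martingale (optional stopping / the stability of the martingale property under stopping by a predictable indicator integrand). Applying this with $\calf = \calf^{\calv_0\cup N^c}$, $\lambda = \mu$, and noting $C$ is an $\calf^{\calv_0\cup N^c}$-stopping time gives that $I(\cdot\le C)\mu$ is an $\calf^{\calv_0\cup N^c}$-intensity for $N_{\cdot\wedge C}$. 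Finally, since $N_{\cdot\wedge C}$ is adapted to the stopped filtration $\calf^{\calv_0\cup N^c}_{\cdot\wedge C}$ and the integrand $I(\cdot\le C)\mu$ is predictable with respect to it, the same martingale relation shows $I(\cdot\le C)\mu$ is in fact an $\calf^{\calv_0\cup N^c}_{t\wedge C}$-intensity, proving \eqref{eq: intensity factorisation}.

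For the second claim, the extra hypothesis $P(t\le C\mid \calf^{\calv_0}_t) > 0$ $P$-a.s.\ for $t<T$ is a positivity-type condition ensuring that the ``de-censored'' object is well-defined. The goal is to recover that $\mu$ itself (not merely $I(\cdot\le C)\mu$) is the $\calf^{\calv_0}$-intensity of $N$ under $P$ — but this is essentially already in hand: $\mu$ was \emph{defined} as the $\calf^{\calv_0}$-intensity of $N$ under $P$ in the first step. What the positivity condition really buys is that this $\mu$ is identified from the observable information $\calf^{\calv_0\cup N^c}_{T\wedge C}$: from \eqref{eq: intensity factorisation} one observes $I(\cdot\le C)\mu$, and wherever $I(s\le C)=1$ this equals $\mu_s$; the positivity condition guarantees that with positive probability we are in the region $\{s\le C\}$ for every $s<T$, so that $\mu$ is not left undetermined on any non-null predictable set. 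Concretely, I would argue that the predictable projection of $I(\cdot\le C)$ onto $\calf^{\calv_0}$ is $P(s\le C\mid \calf^{\calv_0}_{s-})$, which is $P$-a.s.\ strictly positive by the hypothesis (after passing from $\calf^{\calv_0}_t$ to $\calf^{\calv_0}_{t-}$ via predictability), so one can divide and reconstruct $\mu$ as an $\calf^{\calv_0}$-predictable process from the law of the stopped, reduced data. This recovers $\mu$ as the $\calf^{\calv_0}$-intensity of $N$ w.r.t.\ $P$ in an identifiable manner.

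The main obstacle I anticipate is bookkeeping around the three filtrations $\calf^{\calv_0}$, $\calf^{\calv_0\cup N^c}$, and the stopped filtration $\calf^{\calv_0\cup N^c}_{\cdot\wedge C}$, and in particular verifying that the candidate intensity is \emph{predictable} with respect to the correct (smaller, stopped) filtration rather than merely adapted — this is where Lemma \ref{lemma:moving down} is the right tool, and I would invoke it to move the intensity down from $\calf^{\calv_0\cup N^c}$ to the stopped filtration once predictability is checked. A secondary subtlety is the $t<T$ versus $t=T$ distinction in the positivity hypothesis: at the terminal time the conditional probability may vanish, but since intensities are only determined up to their behaviour on $[0,T)$ (a single time point is Lebesgue-null), this causes no real difficulty and I would simply remark on it. The martingale manipulations themselves — optional stopping, stability of stochastic integrals against predictable indicators — are routine and I would not belabour them.
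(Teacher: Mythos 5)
Your treatment of the first claim is essentially the paper's: independent censoring makes the $\calf^{\calv_0}$-intensity $\lambda$ of $N$ also its $\calf^{\calv_0\cup N^c}$-intensity, and optional stopping then yields that $I(\cdot\le C)\lambda$ is an $\calf^{\calv_0\cup N^c}_{t\wedge C}$-intensity of $N_{\cdot\wedge C}$. That part is fine.

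For the second claim there is a genuine gap in how you have read the statement. You write that the conclusion ``is essentially already in hand'' because ``$\mu$ was defined as the $\calf^{\calv_0}$-intensity of $N$ under $P$ in the first step.'' But if the $\mu$ of the second sentence were the particular process you constructed, the claim would be a tautology and the positivity hypothesis would play no role. The intended (and proved) content is a \emph{uniqueness} statement: for \emph{any} non-negative $\calf^{\calv_0}$-predictable $\mu$ such that $I(\cdot\le C)\mu$ is an $\calf^{\calv_0\cup N^c}_{t\wedge C}$-intensity of the stopped process, positivity forces $\mu=\lambda$ up to a $P\otimes dt$-null set, so that the uncensored intensity is \emph{identified} by the stopped, observable information (this is exactly what feeds the de-censoring map in Remark \ref{remark:decens}). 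The paper's argument takes an arbitrary such $\mu$, a bounded $\calf^{\calv_0}$-predictable $h$, and computes
\begin{align*}
E_P\Big(\int_0^t h_s\, P(s\le C\mid\calf^{\calv_0}_{s-})\,(\mu_s-\lambda_s)\,ds\Big)
= E_P\Big(\int_0^t h_s\, I(s\le C)\,(\mu_s-\lambda_s)\,ds\Big)
= 0,
\end{align*}
the last equality because both $I(\cdot\le C)\mu$ and $I(\cdot\le C)\lambda$ compensate $N_{\cdot\wedge C}$; positivity of $P(s\le C\mid\calf^{\calv_0}_{s-})$ then gives $\mu=\lambda$ a.e. Your closing remarks about the predictable projection of $I(\cdot\le C)$ being strictly positive and ``dividing to reconstruct $\mu$'' contain precisely this ingredient, so the technical idea is within reach; what is missing is the recognition that the argument must be run for an \emph{arbitrary} $\mu$ satisfying the display, not for the one you already know to be the $\calf^{\calv_0}$-intensity. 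As written, your second paragraph proves nothing beyond the first.
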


\begin{proof}
    Let $\lambda$ denote the $\mathcal F^{\mathcal V_0}$-intensity w.r.t. $P$ for $N$. Independent censoring means that $M_t := N_t - \int_0^t \lambda_s ds$ is a $P$-martingale with respect to the filtration $\mathcal F^{\mathcal V_0 \cup N^c}$. The optional stopping theorem now implies that 
    $I(\cdot  \leq C) \lambda $ is an $\mathcal F^{\mathcal V_0 \cup N^c}_{t \wedge C}$-intensity for the stopped process $N_{\cdot \wedge C}$, which proves the first claim. 
    
    Suppose $\mu$ is any process such that 
    $I(\cdot  \leq C) \mu $ is an $\mathcal F^{\mathcal V_0 \cup N^c}_{\cdot \wedge C}$-intensity for the stopped process $N_{\cdot \wedge C}$ and let $h$ be a bounded and $\mathcal F^{\mathcal V_0}$-predictable process.  Moreover, let  $\lambda$ be an    
     $\mathcal F^{\mathcal V_0}$-intensity for $N$ with respect to $P$. 
We have that 
\begin{align*}
    & E_P\Big\{\int_0^t h_s P ( s \leq C|\mathcal F_{s-}^{\calv_0}) ( \mu_s - \lambda_s) ds      \Big\} =   
    \int_0^t   E_P\Big\{ h_s P ( s \leq C|\mathcal F_{s-}^{\calv_0}) ( \mu_s - \lambda_s)\Big\} ds \\ = &    \int_0^t   E_P\Big\{ h_s 
    I  (s \leq C)  ( \mu_s - \lambda_s)\Big\} ds  =   
   E_P\Big\{ \int_0^t    h_s 
    I  (s \leq C)  ( \mu_s - \lambda_s) ds \Big\}\\ = &  
    E_P\Big( \int_0^{ t \wedge C}    h_s 
    dN_s \Big) -  E_P\Big( \int_0^{ t \wedge C}    h_s 
     dN_s \Big) =  0.    
\end{align*}
    Since this holds for any such $h$, and 
    $ P ( s \leq C|\mathcal F_{s-}^{\calv_0}) >0$ a.e., we have that $\mu_t = \lambda_t $ $P$-a.s. for almost every $t$.  Especially, we have that 
    $E_P \Big(\int_0^T h_s dN_s\Big) =  
    E_P \Big(\int_0^T h_s \mu_s ds \Big)$ for every $\mathcal F_t^{\mathcal V_0}$-predictable and bounded $h$, which proves the second claim.  
    
\end{proof}

\begin{remark}\label{remark:decens}
Lemma \ref{lem:newDecens} says that independent censoring allows us to identify the intensities for the uncensored processes. This means that, as long as the following positivity $P( t\leq  C | \mathcal F_t^{\mathcal V_0})   > 0$ holds $P$-a.s whenever $t < T$, there exists  \emph{de-censoring} maps $\zeta$ that construct $P$ restricted to $\calf_{t}^{\calv_0}$ from $P$ restricted to $\calf_{t\wedge C}^{\calv_0 \cup N^c}$, since $P$ is uniquely characterised by these intensities \cite[Theorem III 1.26]{JacodShiryaev}. Thus, as long as the above positivity assumption holds,
\begin{align}
    \zeta(P|_{\calf_{T\wedge C}^{\calv_0 \cup N^c}}) =  P|_{\calf_{T}^{\calv_0}}. \label{eq: zeta map}
\end{align}
If we have a $\mu$ as in \eqref{eq: intensity factorisation}, it can be used for an explicit construction of the map $\zeta$ in \eqref{eq: zeta map} as multiplication of a likelihood; see \citep{Jacod:Multivariate}.
\end{remark}

\section{Example of faithfulness violation}
\label{appendix: example of faithfulness violation}

Consider counting processes $N^A_t := I(A \leq t)$ and death $N^D_t := I(D \leq t)$ with intensities
\begin{align*}
    \lambda^A_t &= Y_t^A(U  + 1) \\
    \lambda^D_t &= Y_t^D(U + g(t,A)(1 - Y_t^A) + h_t)
\end{align*}
with respect to $\F_t^{A,D,U}$ and $P$. Here, $Y_t^D = I(N_{t-}^D=0)$, and $Y_t^A = I(N_{t-}^D=N_{t-}^A = 0)$, and $U$ is a binary variable such that $\gamma = \frac{P(U=1)}{P(U=0)}$. This system is compatible with the local independence graph:
\begin{equation*} 
        \xymatrix{ 
        U \ar@/^/[dr] \ar@/^1pc/[drr]& & \\
          & N^A  \ar@/_/[r] & N^D \ar@/_/[l]
        }
\end{equation*}
Conditional expectations of random variables with respect to the natural filtration of a multivariate counting process $N$ has an explicit representation in terms of the jump times of $N$ \cite{Bremaud, LastBrandt1995marked, jacobsen2006point} (see e.g. \cite[Corollary 2.3.2]{LastBrandt1995marked} for an explicit result). This gives rise to the representation
\begin{align}
    \begin{split}
        Y_t^D E[U | \mathcal F_{t-}^{A,D}] = Y_t^D & \Big \{ \frac{E[ U I(t < A \wedge D) ]}{P(t < A \wedge D)}I(t \leq A \wedge D)  \\
        &+ \frac{E[ U I(t < A \vee D) | \F_{A \wedge D}]}{P(t < A \vee D| \F_{A \wedge D})}I(A \wedge D < t \leq A \vee D) \Big \}.
    \end{split}
    \label{eq: U condExp}
\end{align}
We calculate the bracketed terms on the right hand side of \eqref{eq: U condExp}. For the first term we obtain by Bayes' rule
\begin{align*}
    \frac{E[ U I(t < A \wedge D) ]}{P(t < A \wedge D)} &= \frac{P(t < A \wedge D|U=1)P(U=1)}{P(t < A \wedge D|U=1)P(U=1) + P(t < A \wedge D|U=0)P(U=0)} \\
    &= \frac{\gamma}{\gamma + \frac{P(t < A \wedge D|U=0)}{P(t < A \wedge D|U=1)}} \\
    &= \frac{\gamma}{\gamma + e^{ 2t }}.
\end{align*}
For the second term on the right hand side of \eqref{eq: U condExp} we only need to treat the case $A<D$, because $\{A \geq D\}$ is a null set. Note that $\F_{A \wedge D} \cap \{ A < t \leq  D \} = \sigma(A, A<D) \cap \{ A < t \leq  D \}$. Furthermore,
\begin{align*}
    P(t < A \vee D | A, A<D, U) &= e^{ -\int_0^{A \wedge t} \lambda_s^A + \lambda_s^D ds - \int_A^{A \vee t} \lambda_s^D ds } \\
    &= e^{ -\int_0^{A \wedge t} 2U +1 + h_s ds  - \int_A^{A \vee t}  U + g(s,A) + h_s ds }
\end{align*}
This gives 
\begin{align*}
    &\frac{E[ U I(t < A \vee D) | \F_{A \wedge D}]}{P(t < A \wedge D| \F_{A \wedge D})}I(A < t \leq D) = \frac{E[ U I(t < A \vee D) | A , A < D]}{P(t < A \wedge D| A , A < D)}I(A < t \leq D) \\
    &= \frac{\eta^{A<D} }{\eta^{A<D} + \frac{P(t < A \vee D | A , A < D,U=0)}{P(t < A \vee D | A , A < D,U=1)} }I(A < t \leq D) \\
    &= \frac{\eta^{A<D} }{\eta^{A<D} + e^{ A + t } }I(A < t \leq D),
\end{align*}
 where we have defined
 $$\eta^{A<D} := P(U=1 | A, A<D) / P(U=0 | A, A<D).$$ 
In conclusion we have that \eqref{eq: U condExp} equals
\begin{align*}
     Y_t^D E[U | \mathcal F_{t-}^{A,D}] &= Y_t^D \Big \{ \frac{\gamma}{\gamma + G_t } I(t \leq A \wedge D) + \frac{\eta^{A<D} }{\eta^{A<D} + G_t }I(A < t \leq D) \Big\},
\end{align*}
where $G_t := e^{ t \wedge A + t }$.

Using the innovation theorem we find that 
\begin{align*}
    \bar \lambda^A_t &:= E[ \lambda^A_t | \mathcal F_{t-}^{A,D} ] = Y_t^A \big( E[U |\F_{t-}^{A,D}]  + 1 \big) \\
    \bar \lambda^D_t &:= E[ \lambda^D_t | \mathcal F_{t-}^{A,D} ] = Y_t^D\big( E[U |\F_{t-}^{A,D}] + g(t,A)(1 - Y_t^A) + h_t \big)
\end{align*}
define $\F^{A,D}_t$-intensities of $N^A$ and $N^D$, respectively. Pick $g(t, A) = - \frac{\eta^{A < D}}{\eta^{A < D} + G_t}$, and $h_t = - \frac{\gamma}{\gamma + G_t}Y_t^A + B$ for sufficiently large $B$ to ensure that $\lambda^D \geq 0$. We then obtain 
\begin{align} \label{eq:noncausal1}
    \bar \lambda^D_t &= Y_t^D B, \\
    \bar \lambda^A_t &= Y_t^A \Big( \frac{\gamma}{\gamma + G_t } + 1 \Big) \notag
\end{align}
which shows that the local independence graph for the system where $U$ is marginalised out is 
\begin{equation*} 
        \xymatrix{ 
          & N^A & N^D \ar[l]
        } .  
\end{equation*}

\subsection*{Causal validity does not hold in the marginalised system}
Suppose that the model for $(U, N^A, N^D)$ is causally valid w.r.t. an intervention that would prevent $N^A$ from jumping, i.e. that $\tilde P (t > A ) = 0  $.  This means that the local characteristics  w.r.t. $\tilde P$ would be given by   $\tilde \lambda^A = 0$, $\tilde \lambda^D = Y^D ( U + h)$, and $\tilde P(U = 1) = P( U = 1) $, where 
$
h_t = B - \frac{ \gamma } { \gamma + e^{2t}} 
$
$\tilde P ~a.s$.

We furthermore have by Bayes' rule that 
\begin{align*}
    \tilde P( U = 1 | D \geq   t ) = \frac{ \gamma  }   {\gamma + \frac { \tilde P( D \geq t | U = 0)   } {\tilde P( D \geq t | U = 1) }   } = 
    \frac{ \gamma  }   {\gamma + e^{ t}  }. 
\end{align*}
By the innovation theorem, we have that the intensity of $N^D$ w.r.t. $\{\F^{A,D}_t\}_t$ and $\tilde P$ coincides with 
\begin{align} \label{eq:noncausal2}
E_{\tilde P}[\tilde \lambda_t^D |\mathcal F_{t-}^{A,D} ] = & Y^D_t 
( E_{\tilde P }[U  |\mathcal F_{t-}^{A,D} ]  + h_t) 
= Y_t^D  ( \frac{ \gamma  }   {\gamma + e^{ t}  } + h_t)  \\ 
= & Y_t^D  (  B + \frac{ \gamma  }   {\gamma + e^{ t}  }  -   \frac{ \gamma  }   {\gamma + e^{2 t}  }). 
\end{align}
If the submodel induced by $N^A$ and $N^D$ had been causally valid w.r.t. this intervention, then the right-hand side of \eqref{eq:noncausal2} would coincide with \eqref{eq:noncausal1}, which is clearly not the case.

\section{Proof of Proposition \ref{proposition:indCens}}
\label{appendix: proof of proposition indCens}

\begin{proof}[of Proposition \ref{proposition:indCens}] 

Recall that  interventions on counting processes do not affect baseline variables. We start by showing that $P$ and $\tilde P$ are the same when restricting to $\calf^{\calv \setminus N^c}$. To that end we consider a counting process $N \in  \caln \setminus \{N^c\}$ with $\calf$-intensity $\lambda$ with respect to $P$. From causal validity $\lambda$ is also the $\calf$-intensity of $N$ with respect to $\tilde P$. Independent censoring means that ch$(N^c) = \emptyset$, and in particular that $\lambda$ is the $\calf^{\calv \setminus N^c}$-intensity of $N$ with respect to $P$. By analogous reasoning as in the proof of Theorem \ref{theorem:omitU},
we obtain that $\lambda$ is also the $\calf^{\calv \setminus N^c}$-intensity of $N$ with respect to both $P$ and $\tilde P$. 

We conclude that $P|_{\mathcal F_T^{\calv \setminus N^c} } = \tilde P|_{\mathcal F_T^{\calv \setminus N^c}}$. As ${\mathcal F}^{\calv_0}_T \subset\calf_T^{\calv \setminus N^c}$, this implies $P|_{{\mathcal F}^{\calv_0}_T } =\tilde P|_{{\mathcal F}^{\calv_0}_T}$.

Now, pick a  counting process $N \in \calv_0$ with $\calf^{\calv_0}$-intensity $\tilde \nu$ with respect to $\tilde P$, and let $h$ be an $\calf^{\calv_0}$-predictable process. Then 
 \begin{align*} 
                E_{ P} \Big( \int_0^T   h_s  dN_s \Big) 
                =E_{\tilde P} \Big( \int_0^T   h_s  dN_s \Big) 
                =E_{\tilde P} \Big( \int_0^T   h_s \tilde \nu_s ds \Big) 
			= E_{P} \Big( \int_0^T   h_s \tilde  \nu_s ds \Big)
                \end{align*}
where the first and third equalities are due to $P|_{{\mathcal F}^{\calv_0}_T } =\tilde P|_{{\mathcal F}^{\calv_0}_T}$ and the second by definition. Hence, $\tilde \nu$ is also the $\calf^{\calv_0}$-intensity of $N$ with respect to $P$.
         
Finally, the assumptions of the proposition imply that $\caln_0$ is locally independent of $ N^c$ given $\calv_0$, so the $\calf^{\calv_0}$-intensity of any $N \in \calv_0$ is identical with its $\calf^{\calv_0 \cup N^c}$-intensity under $P$ by definition. The analogue is true under $\tilde P$. This implies that 1) the model restricted to $\calf^{\calv_0 \cup N^c }$ is subject to independent censoring, and 2) the $\calf^{\calv_0 \cup N^c}$-intensity of each $N \in \calv_0$ is the same under $P$ and $\tilde P$.
 \end{proof}


\begin{proof}[of Corollary \ref{corollary: prevent censoring}] 
    As the model is causally valid with respect to preventing censoring, it is also causally valid with respect to randomising censoring (see Remark \ref{rem:rondomcens}), and Proposition \ref{proposition:indCens} ensures causal validity restricted to $\calf^{\calv_0 \cup N^c}$ for both. By definition of causal validity, the $\calf^{\calv_0 \cup N^c}-$intensity $\lambda$ of every $N \in \calv_0$ is the same after each given intervention. This means that $\lambda$ is the $\calf^{\calv_0 \cup N^c}-$intensity with respect to both $P^p$ and $P^r$. As the model restricted to $\calf^{\calv_0 \cup N^c}$ is subject to independent censoring (again by Proposition \ref{proposition:indCens}), $\lambda$ is also the $\calf^{\calv_0}-$intensity with respect to both $P^p$ and $P^r$. Uniqueness of the intensities gives the desired result.
\end{proof}

\section{Proof of Theorem \ref{theo:1}}
\label{appendix: proof of theorem 1}

We are now ready to prove Theorem \ref{theo:1}, which we restate.

Remember that the interventions replace the $\calf^{\calv}$-intensities of this system by new intensities for $N^c$ and $N^x$. As we cannot observe $\calu$ we work with the observable intensities, where $ \lambda^c$ is the $\mathcal F^{\mathcal V_0 \cup \mathcal L \cup N^c}$-intensity of $N^c$ with respect to $P$, while $ \lambda^x$ denotes the $\mathcal F^{\mathcal V_0 \cup\mathcal L \cup N^c}$-intensity of $N^x$ with respect to $P$. In contrast, the interventions enforce an $\mathcal F^{\mathcal V_0 \cup N^c}$-predictable $\tilde \lambda^c := \rho^c \cdot \lambda^c$, and an $\mathcal F^{\mathcal V_0}$-predictable $\tilde \lambda^x := \rho^x \cdot \lambda^x$.

\setcounter{theorem}{1}  
\begin{theorem} 
                With the notation and set-up as in Section \ref{sec:msm}, consider a local independence model $\calp(G)$. We assume that censoring is independent with respect to $\calv$, i.e.\ ch$(N^c)=\emptyset$ in $G$.
 We assume causal validity with respect to an intervention that prevents censoring,  and  for additionally intervening on the treatment process. Let $\tilde {\mathcal P}$ denote the resulting interventional model.
                        Further, assume                    
                        \begin{enumerate}[(i)]
  \item 
  $N^c \nrightarrow_G (\caln_0, \caln_{\call}) | (\calv_0,\call)$, and  
  \item 
  $\calu$ is eliminable with respect to $(N^x, \mathcal L \cup \mathcal V_{0}^{\backslash x})$;
         \end{enumerate}
finally, assume the technical conditions that         
$\rho^c$ and 
$\rho^x $ are bounded. Then we  have:\\ [1mm]
The interventional distribution $\tilde{P}$ under both hypothetical  interventions (preventing censoring and intervening on treatment) restricted to the subset $ \mathcal V_0$ is identified  from the  observable information $ {\mathcal F_{T \wedge C}^{\mathcal V_0 \cup \mathcal L \cup N^c}}$ by
\begin{align*}
\tilde P| _{  \mathcal F_{t}^{\mathcal V_0}} 
&= 
                          \zeta\bigg\{ \bigg( W_{t\wedge C} \cdot P|_{\mathcal F_{t\wedge C}^{\mathcal
                                        V_0 \cup \mathcal L \cup N^c}}   
                        \bigg)\bigg|_{\mathcal F_{t\wedge C}^{\mathcal V_0 \cup
                                        N^c}}
                        \bigg\}.
\end{align*}
 Thus, the marginal density $\tilde P|_{ \mathcal F^{\mathcal V_0 }_t }$ is given by re-weighting, marginalising over $\call$ and applying a de-censoring map as in \eqref{eq: zeta map txt} (see Remark \ref{remark:decens} in Appendix \ref{appendix: independent censoring and identifiability}).
 
\end{theorem}

\begin{proof}
	
	We want to identify the hypothetical distribution $\tilde{P}$ under both interventions (preventing censoring and intervening in treatment) restricted to the processes in $ \mathcal V_0$. We will do this by first obtaining the likelihood-ratio between the observational and hypothetical distribution restricted to the filtration $\calf^{\calv_0 \cup \call \cup N^c}_{t \wedge C}$. 
Here, we need to show that the weights $W_{T\wedge C}$ are identified given the observable information $ {\mathcal F_{T \wedge C}^{\mathcal V_0 \cup \mathcal L \cup N^c}}$.  This follows from the first two steps:
We argue that $\mathcal U$ can be `ignored' without destroying causal validity regarding censoring and treatment; this is based on  Theorem \ref{theorem:omitU}  and Proposition \ref{proposition:indCens}  (steps 1 and 2, below). 
We let $\mathcal P^p$ correspond to the model for an intervention where censoring is prevented (no intervention on the treatment process), 
$\mathcal P^r$ corresponds to the model for an intervention where censoring  is randomised according to  $\tilde \lambda^c$ (no intervention on the treatment process). 
Then  we show that the likelihood-ratio between $\tilde P$ and $P$ is given by \eqref{cx.weights} in (step 3, below). Finally, we argue with Lemma \ref{lem:newDecens} that we can extend the measure $\tilde P$ beyond the censoring time (Step 4).\\

{\em Step 1 (marginalising $\calu$, retaining causal validity wrt.\ preventing or randomising censoring):}

By Proposition \ref{proposition:indCens}, condition \eqref{item: thm 1 delta separation} ensures causal validity with respect to prevention of censoring (and hence randomisation of censoring by assumption) when restricted to $\calv_0 \cup \call \cup \{ N^c\}$; thus $\calu$ can be ignored regarding censoring. 

Moreover, with Corollary \ref{corollary: prevent censoring}, we conclude that $P^r$ coincides with $P^p$ when restricted to $\calf^{\calv_0 \cup \call}$, and hence when restricted to $\calf^{\calv_0}$ as $\calf^{\calv_0} \subset \calf^{\calv_0 \cup \call }$. 
We can thus consider the strategy that randomises censoring, i.e.\ $P^r$, in the remainder of the proof. 
\\

{\em Step 2 (marginalising $\calu$, retaining causal validity wrt.\ treatment):}
                As censoring is independent in $\calp(G)$ and with  \eqref{item: thm 1 delta separation}  and \eqref{item: thm 1 eliminable}, we have that $\calu$ is eliminable with respect to $(N^x,\call  \cup \{ N^c\}\cup \calv_0^{\backslash x} )$, which also holds under $P^r$ as no new dependencies are introduced. 
                From Theorem \ref{theorem:omitU} we thus have that the restriction of $P^r$ to $\mathcal V_0 \cup \mathcal L \cup \{ N^c\}$ (ignoring $\calu$) remains causally valid with respect to the treatment intervention. \\ 

{\em Step 3 (combined intervention by using combined weights):}


In Step 1 we argued that causal validity for the intervention that randomises censoring is retained when ignoring $\mathcal U$. The likelihood-ratio associated with this intervention is 
     \begin{equation*}
       \frac{ dP^r |_{\mathcal F_t^{\mathcal V_0 \cup \mathcal L \cup  N^c }}} {
               dP|_{\mathcal F_t^{\mathcal V_0 \cup \mathcal L \cup   N^c }} }=
              (\rho_C^c)^{I(t
               \geq C)} \exp \left\{-\int_0^t (\rho_s^c  - 1 ) \lambda_s^c ds  \right\}.
     \end{equation*}
              
Moreover, in Step 2 we argued that causal validity for the intervention on the treatment process is retained when ignoring $\mathcal U$.
This imposes the intensity $\tilde \lambda^x$ for the counting process $N^x$, and the likelihood-ratio associated with this intervention is \\
  \begin{equation*}
     \frac{ d\tilde P|_{\mathcal F_t^{\mathcal V_0 \cup \mathcal L \cup  N^c }}} {
             dP^r |_{\mathcal F_t^{\mathcal V_0 \cup \mathcal L \cup 
                             N^c }} }=
     \prod_{s \leq t} 
                     (\rho_s^x)^{\Delta N_s^x} \exp \left\{-\int_0^t (\rho_s^x
             -1){\lambda}_s^x ds  \right\}. 
        \end{equation*}
               
Together, this implies that re-weighting  $P$ leads to $\tilde P$, ignoring $\mathcal U$, using  the following weights:
   \begin{equation*}
      \frac{ d\tilde P|_{\mathcal F_t^{\mathcal V_0 \cup \mathcal L \cup                               N^c }}} {
              dP|_{\mathcal F_t^{\mathcal V_0 \cup \mathcal L \cup 
                              N^c} } } =
      \frac{ d\tilde P|_{\mathcal F_t^{\mathcal V_0 \cup \mathcal L \cup                               N^c }}} {
              dP^r |_{\mathcal F_t^{\mathcal V_0 \cup \mathcal L \cup                               N^c }} }
      \cdot  \frac{ dP^r |_{\mathcal F_t^{\mathcal V_0 \cup \mathcal L \cup                               N^c }}} {
              dP|_{\mathcal F_t^{\mathcal V_0 \cup \mathcal L \cup
                              N^c }} } = W_t, 
   \end{equation*}
where  $W_t$ is defined as in (\ref{cx.weights}).
The process $W_t$ is not observable after censoring. However, by the optional stopping theorem we have that 
                
                
\begin{equation} \label{eq:MSMLR}
         W_{t\wedge C} \cdot P|_{\mathcal F_{t\wedge C}^{\mathcal
                                        V_0 \cup \mathcal L \cup N^c}} 
                        = \tilde P|_{\mathcal F_{t\wedge C}^{\mathcal
                                        V_0 \cup \mathcal L \cup N^c}}. 
\end{equation}
Hence the stopped weights are identified by the observed information; as before we rely on the uniqueness of the intensities \cite[Theorem II.T12]{Bremaud}.\\

\smallskip

{\em Step 4:} 
Now, from \eqref{eq:MSMLR} we have $\tilde P$ expressed in terms of the weights and the observational density stopped at censoring, restricted to $\calf_{t\wedge C}^{\calv_0 \cup \call \cup N^c}$. This means that 
\begin{align*}
                        W_{t\wedge C} \cdot P|_{\mathcal F_{t\wedge C}^{\mathcal
                                        V_0 \cup \mathcal L \cup N^c}}   
                       \bigg|_{\mathcal F_{t\wedge C}^{\mathcal V_0 \cup
                                        N^c}}
                        &=  \tilde P|_{\mathcal F_{t\wedge C}^{\mathcal
                                        V_0 \cup \mathcal L \cup N^c}}
                        \bigg|_{\mathcal F_{t\wedge C}^{\mathcal V_0 \cup
                                        N^c}} 
                        =   \tilde P|_{\mathcal F_{t\wedge C}^{\mathcal V_0 \cup N^c}}.
\end{align*}
Moreover, we have that independent censoring holds under $\tilde P$ within  $\calv_0\cup N^c$. This is because  $\call \nrightarrow N^c | \calv_0$ under $\tilde P$  due to construction (as $\tilde \lambda^c$ is $\calf^{\calv_0}$-predictable). Thus, with the same argument as for eliminability, property \eqref{item: thm 1 delta separation} is retained when marginalising over $\call$ so that $N^c \nrightarrow \caln_0 | \calv_0$. With Proposition \ref{proposition:indCens}, we see that independent censoring holds
 in the model restricted to $\calv_0 \cup N^c$ under $\tilde P$. The de-censoring map  $\zeta$ of \eqref{eq: zeta map} thus yields the distribution with respect to the non-stopped filtration:
\begin{align*}
                          \zeta\bigg\{ \bigg( W_{t\wedge C} \cdot P|_{\mathcal F_{t\wedge C}^{\mathcal
                                        V_0 \cup \mathcal L \cup N^c}}   
                        \bigg)\bigg|_{\mathcal F_{t\wedge C}^{\mathcal V_0 \cup
                                        N^c}}
                        \bigg\}
                        &= 
                        \zeta\bigg\{\bigg(\tilde P|_{\mathcal F_{t\wedge C}^{\mathcal
                                        V_0 \cup \mathcal L \cup N^c}}
                        \bigg)\bigg|_{\mathcal F_{t\wedge C}^{\mathcal V_0 \cup
                                        N^c}} \bigg\}
                        =   \zeta\bigg(\tilde P|_{\mathcal F_{t\wedge C}^{\mathcal V_0 \cup N^c}}\bigg)
 =  \tilde P| _{  \mathcal F_{t}^{\mathcal V_0}}.  
\end{align*}
 
Finally, note that the technical conditions are required so that the process (\ref{cx.weights}) is uniformly integrable in order to define a proper likelihood-ratio; weaker conditions can also be given \cite[Theorem IV 4.16a]{JacodShiryaev}.
        \end{proof}

\section{Reproducible research}
 We do not have the rights to share the HPV dataset. The data can be obtained upon application at the Cancer Registry of Norway. The \texttt{R} implementation of the analysis in Section \ref{section: analysis}, which has been applied to a simulated dataset, is available on the GitHub repository \texttt{github.com/palryalen/paper-code}. The simulated dataset, \texttt{sim\_data.RData}, has identical features to the real dataset. These features include event types "CIN2+" and "follow-up", respectively indicating the occurrence of CIN2+ and subsequent testing for a given individual, along with event times.

A description of the data pre-processing used in the analysis in Section \ref{section: analysis} follows. The relevant features 
(test technology, event type, and event times) were extracted from the HPV dataset. Event time resolution was given with an accuracy of one month. We applied standard tiebreaking of the tied event times to use the \texttt{ahw} package. The data was organized in long format, identical to the format found in \texttt{sim\_data.RData}. The code for the  analysis of Section \ref{section: analysis} is displayed in the file \texttt{example\_analysis.R}. Both of these files can be found in the  GitHub repository.

\end{document}